\newtheorem{theorem}{Theorem}
\newtheorem{lemma}{Lemma}
\newtheorem{proposition}{Proposition}
\newtheorem{definition}{Definition}
\DeclareMathOperator*{\argmin}{arg\,min}
\title{From AoI to QVAoI: Query-Based Semantics-Aware Scheduling for Energy-Harvesting IoT Systems}
\author{
	\IEEEauthorblockN{Erfan Delfani and Nikolaos Pappas, \IEEEmembership{Senior Member, IEEE}} 
	\thanks{The authors are with the Department of Computer and Information Science at Linköping University, Sweden, email: \{\texttt{erfan.delfani, nikolaos.pappas\}@liu.se}. This work has been supported by the Swedish Research Council (VR), ELLIIT, and the European Union (ETHER, 101096526, ELIXIRION, 101120135, 6G-LEADER, 101192080, and SOVEREIGN, 101131481). A shorter version has been published in \cite{delfani2024QVAoI}.}
}
\begin{document}
	
	\maketitle
	
	\begin{abstract}
		In this work, we study the freshness and significance of information in an IoT status update system in which an Energy Harvesting (EH) device samples an information source and forwards update packets to a destination node via a direct channel. We introduce and optimize a semantics-aware metric, Query Version Age of Information (QVAoI), in the system along with other metrics: Query Age of Information (QAoI), Version Age of Information (VAoI), and Age of Information (AoI). We formulate the optimization problem as a Markov Decision Process to determine the optimal transmission policy at the device, which decides the time slots for transmitting updates, subject to the device's battery energy limitations and the energy arrivals. Furthermore, we derive closed-form expressions for the average update rate and the QVAoI for a unit-capacity battery, serving as analytical benchmarks. We compare the performance of QVAoI-Optimal, QAoI-Optimal, VoI-Optimal, and AoI-Optimal policies with a baseline greedy policy. All semantics-aware policies achieve better performance than the greedy policy. The QVAoI-Optimal policy, in particular, demonstrates a significant performance improvement either by providing fresher, more relevant, and more valuable updates with the same energy arrivals or by reducing the number of transmissions in the system while maintaining the same level of freshness and information significance as the QAoI-Optimal and other policies.
	\end{abstract}

    \begin{IEEEkeywords}
    Semantics-aware communication, Status update systems, Energy harvesting, Query Version Age of Information (QVAoI), Pull-based communications.
    \end{IEEEkeywords}
	
	\section{Introduction}
	Communicating timely and informative data is crucial for status update systems within real-time IoT networks. These systems involve the sensing and transmission of update packets from an information source through a network to a destination monitor for further processing \cite{yates2021age,abd2019role}. Information packets are sampled using IoT devices and shared within a network, facilitating various applications related to the monitoring and controlling of remote environments, particularly in the contexts of smart cities, intelligent industries, smart agriculture, metaverse, and healthcare, as shown in Fig. \ref{fig_HighLevelSetup}. However, there are limitations in these networks, as IoT devices are highly energy-limited, and the network resources are restricted in terms of bandwidth, channel reliability or availability, and other factors. These limitations necessitate more effective, cost- and energy-efficient approaches for status updating, particularly when communication from a remote, low-energy IoT device (such as an Energy Harvesting (EH) sensor located in an isolated area or a battery-powered wearable device, whether worn or implanted) to a destination monitor is involved.
	
	The \emph{semantics-aware communication} paradigm introduces novel approaches that address the transmission and utilization of the right amount of data at the right time to achieve the designated goals in status update systems\cite{kountouris2021semantics}. This is accomplished by employing semantics-aware performance metrics and managing the information chain from generation to transmission and utilization within the communication network. Recent studies have demonstrated substantial benefits of this paradigm in status update systems \cite{yates2021age}. 

    \begin{figure*}[t!]
        \centering
        \includegraphics[width=5.2in]{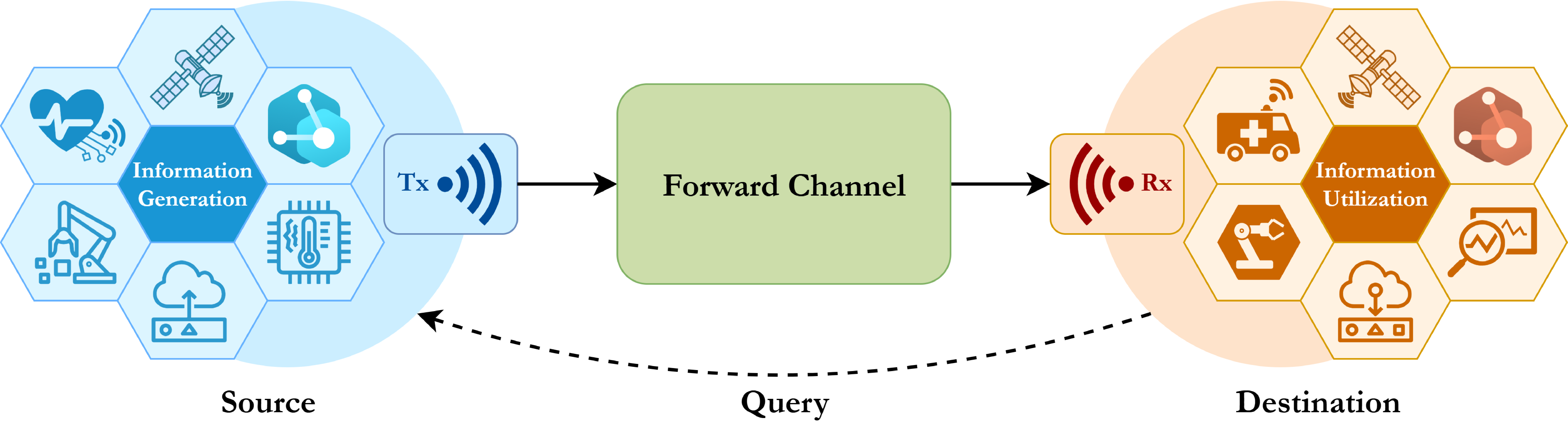}
        \caption{Status update system for various IoT applications. In pull-based setups, the receiver requests the data when needed by sending queries, whereas in push-based setups, the transmitter forwards data to the receiver regardless of the queries.}
        \label{fig_HighLevelSetup}
    \end{figure*}
	
	At the core of semantics-aware communication are semantic metrics that capture the \emph{semantic attributes} of information: \emph{freshness}, \emph{relevance}, and \emph{value}. The freshness of information relates to the staleness of information packets in the network from their generation until their utilization. Higher freshness corresponds to lower staleness of the data packets. The relevance of information relates to sampling the appropriate piece of information from the source. In contrast, the value of information is the benefit to the destination node of receiving the information, relative to the cost of its transmission. We can refer to the relevance and value of information as its \emph{significance}.
	Various metrics have been introduced in the literature, including Age of Information (AoI) \cite{kaul2012real}, non-linear AoI \cite{zheng2019closed}, Age of Incorrect Information (AoII) \cite{maatouk2020age}, Query Age of Information (QAoI) \cite{chiariotti2022query}, Version Age of Information (VAoI) \cite{yates2021Vage}, and state-aware AoI \cite{delfani2024state,luo2026exploiting}. Among these, AoI is a freshness metric that quantifies the time elapsed since the generation time of the last successfully received packet at the destination node. Version AoI (VAoI) is a metric that jointly quantifies the freshness and relevance of information, measuring the number of versions by which the receiver lags behind the source. Query AoI (QAoI) is another performance metric that represents the freshness and value of information. QAoI considers the AoI only when there is a request from the destination node, i.e., only in query instances when the information is assumed to be useful to the receiver.

    However, these existing metrics only partially capture the attributes of information. VAoI accounts for the evolution of the information source through version changes, but it does not consider whether the information is requested by the receiver. In contrast, QAoI captures the timeliness of information at query instances but does not account for the relevance of the transmitted content in terms of version changes. As a result, in systems where both source dynamics and query-based information access are important, these metrics may lead to suboptimal update policies. This highlights the need to move beyond \emph{blind continuous data monitoring} toward a more efficient framework that prioritizes the delivery of \emph{relevant information when it is needed}.

    In this work, we introduce the Query VAoI (QVAoI) metric, which captures all three semantic attributes and is particularly suitable for systems where both source evolution and query-based information demand are critical. The proposed QVAoI metric is especially relevant in \emph{pull-based} and highly resource-constrained IoT systems, where updates are only used when requested and when the underlying information has changed significantly. 
    Such scenarios arise in applications such as remote monitoring with event-driven queries, digital twins, satellite IoT, and industrial IoT control loops, where unnecessary transmissions increase energy consumption without improving system performance. For example, in event-driven IoT monitoring and control systems, the destination node requests relevant fresh data only when a specific event or anomaly is detected. In addition, in satellite IoT, particularly in LEO satellite networks, where connectivity is often intermittent, QVAoI enables query-based updates that reduce unnecessary transmissions and conserve energy. 
    More broadly, QVAoI provides a principled foundation for communication-efficient protocols by enabling systems to prioritize transmissions that are both informative about source evolution and useful to the receiver under query-based access.

    We investigate the advantages of employing and optimizing VAoI and QVAoI within an EH status update system, compared to AoI and its query-based counterpart, QAoI. As content-based metrics, VAoI and QVAoI capture not only freshness but also the relevance and value of information. In contrast, AoI and QAoI do not account for the informativeness or content relevance of updates. We demonstrate that optimizing VAoI and QVAoI in a status update system with an EH IoT device can improve performance and reduce unnecessary transmissions without compromising the conveyed information, particularly for sources that evolve over time.

    \subsection{Related Works}
    The optimization of freshness metrics in EH status update systems has been extensively studied in the literature \cite{yates2021age}. These systems encompass diverse configurations and operate under various time settings. Specifically, in the discrete-time setting, \cite{delfani2024state} considers an EH sensor that monitors a source with normal/alarm macro-states and optimizes a state-aware AoI cost function. The work in \cite{leng2019age} examines an EH secondary sensor in a cognitive radio system, modeling the sequential decision between spectrum sensing and status updating to optimize the average AoI. The authors of \cite{abd2020reinforcement} study an RF-powered IoT device operating over a fading channel, together with its extension to multiple sources, with the objective of minimizing the average or weighted AoI. The study in \cite{ceran2021reinforcement} investigates an EH transmitter communicating over an error-prone channel, which must decide whether to sample, retransmit, or remain idle. Additional scenarios include an EH node with a finite battery receiving updates from heterogeneous sources \cite{gindullina2021age}, which aims to minimize the average AoI; EH remote-sensing systems over time-varying channels \cite{jaiswal2021minimization}, which minimize the time-averaged AoI; and an EH-powered ARQ system \cite{crosara2021stochastic} that minimizes the AoI for successfully delivered packets.
    
    Moreover, studies extend beyond AoI to incorporate relevance and value attributes. There exists a body of work on version-aware \cite{yates2021Vage,buyukates2022version,kaswan2022timely,kaswan2022susceptibility,mitra2022asuman,mitra2024age,kaswan2025age,delfani2023version,mitra2023learning,abd2023distribution,delfani2025LeoSats,salimnejad2025optimizing} and query-based \cite{chiariotti2022query,hatami2022demand,ildiz2023pull,ouguz2022implementation,holm2023goal,agheli2023effective,ayik2023optimization,kriouile2023pull,zakeri2023query,hatami2024status,shiraishi2024coexistence,cosandal2024modeling,kalor2025data,liu2025optimizing} metrics. Specifically, in EH scenarios, several works have been proposed. These include on-demand AoI minimization in multi-user EH IoT networks under various constraints, for both full \cite{hatami2022demand} and partial battery knowledge \cite{hatami2024status}, where on-demand AoI quantifies the freshness of information at the users’ request instants; optimization of the QAoI metric in a pull-based EH sensor model \cite{holm2021freshness,chiariotti2022query}, with QAoI reflecting the freshness at the instants when the receiver needs the data; minimization of VAoI in cache-enabled gossiping networks \cite{delfani2023version} and in LEO satellite networks \cite{delfani2025LeoSats}, which collect versions of data from an EH sensor; optimization of AoII and distortion-based metrics for semantic-aware sampling in tracking systems with partial observability \cite{zakeri2024semantic}; and optimization of Version Innovation Age \cite{salimnejad2025optimizing} for remote tracking of a Markovian source using an EH sensor, considering the state of the source, its importance, in addition to version evolution.

    \subsection{Contributions}
    The main contributions of this paper are as follows:
    \begin{itemize}[leftmargin=1.5em]
        \item We introduce QVAoI, a new semantic metric that captures the timeliness, relevance, and value of information, and optimize it for an EH status update system.
        \item We analytically identify the existence and structure of the optimal update policy for optimizing QVAoI and other metrics, including AoI, VAoI, and QAoI. We prove that the optimal policy adopts a threshold-based structure.
        \item We derive the closed-form average update rate and QVAoI for a system with a minimal unit-capacity (single-size) battery under a threshold policy with an arbitrary threshold $\Delta_T$.
        \item We use simulations to validate the analytical results and to illustrate the impact of system parameters on the optimal average QVAoI and other metrics by comparing them with baseline policies.
    \end{itemize}
	
	\section{Semantic metrics: From AoI to QVAoI}
	\label{SemanticMetrics_QVAoOI}
	In this section, we present the formal definitions of the metrics AoI, QAoI, and VAoI, and introduce our proposed metric, QVAoI. AoI is defined as the time elapsed since the generation time of the last received update, i.e., $\Delta^{AoI}(t) \triangleq t-u(t)$, where $u(t)$ is the timestamp of the current update at the receiver. The AoI is the most widely used metric in the literature, and its optimization yields the freshest updates in status update systems across various configurations \cite{yates2021age,pappas2022age}.
	
	VAoI quantifies the number of versions by which the receiver lags behind the source, i.e., $\Delta^{\mathit{VAoI}}(t) \triangleq N_S(t)-N_R(t)$, where $N_S(t)$ and $N_R(t)$ represent the version numbers of the current updates at the source and receiver, respectively. By version, we refer to any significant change in the content or a new generation of information at the source. 
	
	QAoI is an extension of AoI that considers the age at instances when there is a request from the destination node. This metric is suitable for \emph{pull-based} status update systems, wherein the destination node requests and controls the generation or transmission of updates as needed and when they are valuable for utilization. A pull-based setup contrasts with a \emph{push-based} setup, in which the source or transmitter decides to push updates to the receiver at its discretion, regardless of requests from the receiver. If we presume a binary request arrival process $r(t)$, where $r(t)$ equals $1$ when there is a request from the receiver and $0$ otherwise, then QAoI is defined as $\Delta^{\mathit{QAoI}}(t)  \triangleq  r(t) \times \Delta^{AoI}(t)$. This metric penalizes system staleness only when requests are demanded. In a practical scenario where there is a lag ($\tau$) between the request time and when the system controls are applied, the definition can be corrected to $\Delta^{\mathit{QAoI}}(t)  \triangleq  r(t-\tau) \times \Delta^{AoI}(t)$. 
	
	We introduce the QVAoI metric as an extension of VAoI and QAoI that considers both content changes at the source and queries from the destination. This metric combines the advantages of both VAoI and QAoI to represent relevant and valuable updates alongside the freshest ones. We define the QVAoI as the count of versions by which the receiver lags behind the source at query instances. This metric ensures that the system is not penalized in the absence of requests and when the transmission of updates holds no value or utility for the receiver. The mathematical representation of QVAoI is: 
	\begin{align}
		\Delta^{\mathit{QVAoI}}(t)  \triangleq  r(t-\tau) \times \Delta^{\mathit{VAoI}}(t),
	\end{align}
	where $r(t)$ represents the binary request arrival process, and $\tau$ denotes the time lag between the request time and the instant when the system controls (e.g., update policies) are applied.

    An illustration of the aforementioned metrics is shown in Fig. \ref{fig_SemanticMetrics} for a discrete-time setting $t \in \{0,1,2,\dots\}$, where an update is generated in each time slot. We assume that the service time required for a generated update packet to reach the receiver is one time slot. As shown in the figure, two information packets are delivered at reception instances. At the first reception instance, the update packet generated in the previous time slot is delivered to the receiver, causing the AoI to drop to $1$, indicating that one time slot has passed since the packet was generated. The VAoI becomes $0$ because the source and the receiver now hold the same version of the information. At this moment, both QAoI and QVAoI are zero since the receiver has not issued any queries. Afterward, the AoI increases linearly to $7$ until the next reception instance. During this interval, four new versions (out of $7$ update packets) are generated at the information source, increasing the VAoI to $4$, as the receiver becomes $4$ versions behind. In the same period, three consecutive queries arrive from the receiver. As a result, QAoI and QVAoI become equal to AoI and VAoI, respectively, in the following slots ($\tau=1$). The metrics continue to evolve in this fashion.

    \begin{figure}[t!]
        \centering
        \includegraphics[width=3.3in]{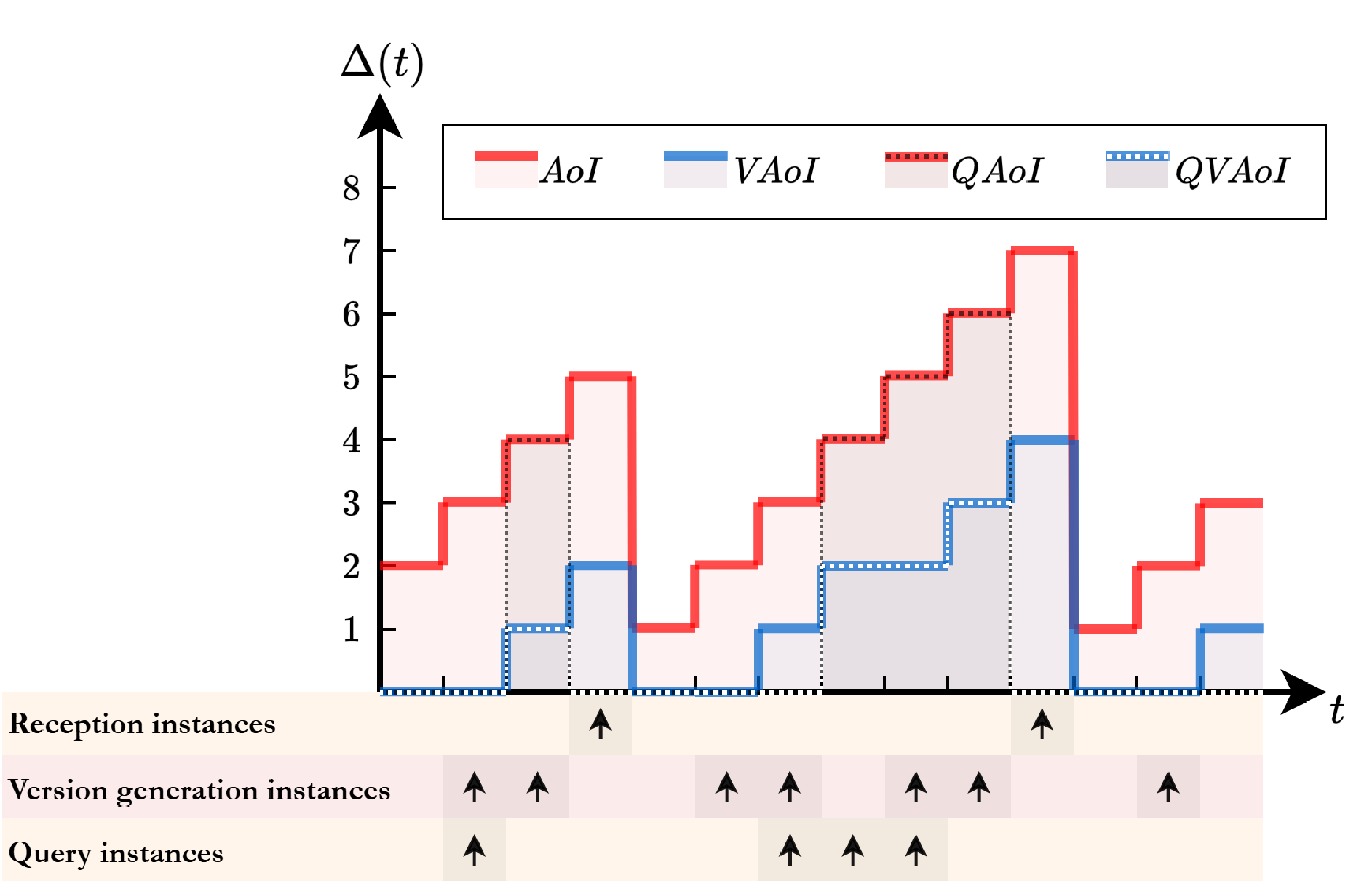}
        \caption{Evolution of metrics over time.}
        \label{fig_SemanticMetrics}
    \end{figure}

    \begin{table}
        \centering
        \caption{Notation of the variables}
        \begin{tabular}{|c|c|}
            \hline
            \textbf{Variable} & \textbf{Description} \\
            \hline
            $s(t)$ & State vector of the system at time slot $t$ \\
            $\Delta(t)$ & Semantic metric at time slot $t$ \\
            $b(t)$ & Battery state at time slot $t$ \\
            $r(t)$ & Query process at time slot $t$ \\
            $q$ & Query arrival probability \\
            $\mathit{\textit{e}}(t)$ & Energy arrival process \\
            $\beta$ & Energy arrival probability \\
            $\mathit{\textit{z}}(t)$ & Version generation process \\
            $p_t$ & Version generation probability \\
            $\mathit{\textit{c}}(t)$ & Channel success process \\
            $p_s$ & Channel success probability \\
            $\pi$ & Update policy \\
            $a(t)$ & Scheduled action at time slot $t$ \\
            $d(t)$ & Realized action at time slot $t$ \\
            $V(s)$ & Value function \\
            $\mu(s)$ & State-stationary probability of state $s$ \\
            $b_{\text{max}}$ & Size of the battery \\
            $\Delta_{\text{max}}$ & Upper bound on the semantic metric \\
            \hline
        \end{tabular}
         \label{tab:placeholder}
    \end{table}
	
	\section{System Model}
    Status update systems provide a fundamental setup for delivering information across a wide range of IoT applications, including real-time healthcare monitoring with wearables, LEO satellite-based observation of remote areas, industrial automation, and digital twin platforms, as illustrated in Fig. \ref{fig_HighLevelSetup}. 
    In this section, we describe the system model for a status update system to optimize metrics AoI and VAoI in a push-based scenario, and QAoI and QVAoI in a pull-based scenario.
    
	\subsection{System Setup}
	We consider the status update system shown in Fig. \ref{SimpleSystemModel_Fig}, in which an EH IoT device samples and transmits update packets to a destination node. The IoT device can be a sensor or a measurement device, and the packets are sampled from an information source, typically a physical process. The destination node can either initiate a request or act as a request aggregator for a destination network, sending queries to the IoT device to obtain new updates.
	
	Our objective is to derive an \emph{update policy} that optimizes the semantic metrics within this system while adhering to the energy constraints imposed on the EH device. This update policy determines the scheduling of optimal times for transmitting new updates, which results in the best performance in terms of metrics. The system operates in discrete time slots, allowing the device to decide whether to transmit an update or remain idle in each slot. We assume that each time slot is long enough to support the sampling and transmission of a new update from the device to the receiver.
	The EH device is equipped with a rechargeable battery of $b_{\text{max}}$ units, which collects energy from ambient sources according to a Bernoulli distribution with probability $\beta$ per time slot. When the battery is empty, the device cannot send updates; otherwise, it must decide when to schedule transmissions. We assume a normalized battery unit so that each transmission consumes one energy unit from the battery. In this system, each update packet from the information source is labeled with a timestamp and a version number, and new versions are generated with probability $p_t$ in each time slot. 
	
	\begin{figure}[bt!]
		\centering
		\includegraphics[width=3.3in,trim={0cm 0cm 0cm 0cm}]{SystemModelSimple.eps}
		\caption{The considered system model.}
		\label{SimpleSystemModel_Fig}
	\end{figure}
	
	We consider an unreliable forward channel from the device to the receiver, where data packets are delivered with success probability $p_s$. We also assume an error-free backward channel from the receiver to the device, used to transmit queries and acknowledgments (ACKs) for successful data packet reception. The device becomes aware of the last timestamp and version stored at the destination node by receiving ACK feedback. We assume a query arrival process $r(t)$ at the receiver, which follows a Bernoulli distribution with probability $q$ per time slot, representing requests from the destination node (or an external network).
	
	\subsection{Problem Formulation}
	
	Our objective is to determine an optimal update policy to optimize the time-averaged expected value of AoI, QAoI, VAoI, and QVAoI, as defined in Section \ref{SemanticMetrics_QVAoOI}.
	\begin{gather}
		\Delta^{AoI}(t)  \triangleq  t-u(t), \\
		\Delta^{\mathit{VAoI}}(t)  \triangleq  N_S(t)-N_R(t), \\
		\Delta^{\mathit{QAoI}}(t)  \triangleq  r(t-1) \times \Delta^{AoI}(t), \\
		\Delta^{\mathit{QVAoI}}(t)  \triangleq  r(t-1) \times \Delta^{\mathit{VAoI}}(t),
	\end{gather}
	where $t$ is the current time, $u(t)$ is the timestamp of the current update at the receiver, $N_S(t)$ and $N_R(t)$ represent the version numbers of the current updates at the source and receiver, respectively, and $r(t)$ is the binary \emph{query arrival process}.

	An update policy, denoted by $\pi$, is a sequence of actions taken by the IoT device over discrete time slots, i.e., $\pi  \triangleq  \big(a^\pi(0), a^\pi(1), a^\pi(2), \cdots \big)$, where $a^\pi(t)$ is the action realized at time $t$ under the policy $\pi$. Specifically, $a^\pi(t) = 1$ indicates a transmission action, while $a^\pi(t) = 0$ denotes an idle action. Due to the stochastic nature of the system variables, the resulting semantic metric under policy $\pi$ is a stochastic process, denoted by $\Delta^{\pi}(t)$, where $\Delta^{\pi}(t)$ can be either $\Delta^{AoI}(t)$, $\Delta^{\mathit{QAoI}}(t)$, $\Delta^{\mathit{VAoI}}(t)$, or $\Delta^{\mathit{QVAoI}}(t)$. The optimization problem can be formulated as follows:
	\begin{align}
		\label{MainOptProb_Eqn}
		\min_{\pi \in \Pi}\ \lim_{T\rightarrow\infty} {\frac{1}{T} E\left[ \sum_{t=0}^{T-1} \Delta^{\pi}(t) \Big| s_0 \right]},
	\end{align}
	where $\Pi$ is the set of all feasible policies and $s_0$ is the system's initial state. \eqref{MainOptProb_Eqn} can be formulated as an infinite-horizon average cost Markov Decision Process (MDP) problem. The MDP problem is characterized by a tuple $<\mathcal{S},\mathcal{A},P,C>$, where $\mathcal{S}$ is the state space, $\mathcal{A}$ is the set of actions, $P$ is the state transition probability function, and $C$ is the cost function.
	
	\begin{itemize}[leftmargin=0.14in]
		\item \textit{States}: The state vector $s(t)$ is defined as $s(t) \triangleq \left[b(t),\Delta(t),r(t)\right]^T \in \mathcal{S}$, where $b(t)$ is the state of the battery, taking value in the set $\mathcal{B} \triangleq \{0,1,2,\ldots,b_{\text{max}}\}$. $\Delta(t) \in \{0,1,2,\cdots,\Delta_{\text{max}}\}$ is either AoI or VAoI at the receiver, and $r(t) \in \{0,1\}$ is the query process at time slot $t$. $r(t)$ is $1$ when there is a query from the receiver and $0$ otherwise. The state space, $\mathcal{S}\! \triangleq \!\big\{(b,\Delta,r)\!:\! \ b \in \mathcal{B},\Delta \in \{1,2,\cdots\!,\Delta_{\text{max}}\} \text{, and } r \in \{0,1\} \big\}$ is a finite set.
		\item \textit{Actions}: At time $t$, $a(t)=0$ represents the action of staying idle, while $a(t)=1$ represents the action of transmitting an update. The action $a(t)$ is forced to be $0$ when $b(t)=0$ or $r(t)=0$. We define the realized action as $d(t) \triangleq \mathds{1}_{\{b(t) \neq 0\}} \times r(t) \times a(t)$, where $\mathds{1}_{\{\cdot\}}$ is the indicator function.
		\item \textit{Transition probabilities}: Given the following equation, 
		\begin{align}
			\label{TransProb_Eqn}
			\mathbb{P}&\left[s(t\!+\!1)| s(t),a(t)\right]\!=\!\mathbb{P}\left[b(t\!+\!1)|b(t),r(t),a(t)\right] \notag \\
			&\times  \mathbb{P}\left[\Delta(t\!+\!1)|\Delta(t),r(t),a(t)\right]\times \mathbb{P}\left[r(t\!+\!1)\right],
		\end{align}
		the transition probabilities are presented in Section \ref{TranProb_Section}.
		\item \textit{Cost function}: The transition cost function equals the metric at the next time slot, where the action affects the state. That is, $C\big( s(t), a(t),s(t+1)\big) \triangleq \Delta(t+1)$, where $\Delta(t)$ can be either $\Delta^{AoI}(t)$, $\Delta^{\mathit{QAoI}}(t)$, $\Delta^{\mathit{VAoI}}(t)$, or $\Delta^{\mathit{QVAoI}}(t)$.
	\end{itemize}

    \textit{Remark:} QVAoI represents a general metric; it reduces to VAoI when $r(t)=1$ for all $t$ (i.e., $q=1$), and it reduces to QAoI when the source generates a new version in every time slot (i.e., $p_t=1$).
	
	\subsection{Transition probabilities}
	\label{TranProb_Section}
	We provide the transition probabilities between the MDP states by introducing the following Bernoulli processes: the \emph{energy arrival process}, $\mathit{\textit{e}}(t)$, the \emph{channel success process}, $\mathit{\textit{c}}(t)$, and the \emph{version generation process}, $\mathit{\textit{z}}(t)$, given by:
	
	\begin{align}
		\label{BernoulliProcesses}
		\begin{matrix}
			\mathit{\textit{e}}(t) \!=\!
			\begin{cases}
				1 & \text{w.p. } \beta, \\
				0 & \text{w.p. } 1\!-\!\beta, \\
			\end{cases} & 
			\mathit{\textit{c}}(t) \!=\! 
			\begin{cases}
				1 & \text{w.p. } p_s, \\
				0 & \text{w.p. } 1\!-\!p_s, \\
			\end{cases} \\
			\mathit{\textit{z}}(t) \!=\! 
			\begin{cases}
				1 & \text{w.p. } p_t, \\
				0 & \text{w.p. } 1\!-\!p_t. \\
			\end{cases}
		\end{matrix}
	\end{align}
	
	We can now characterize the evolution of the states:
	\begin{align}
		b(t\!+\!1) &\!=\! \min \left\{ b(t) \!+\! \mathit{\textit{e}}(t) \!-\! d(t) ,b_{\text{max}} \right\}. \\
		\Delta^{AoI}(t\!+\!1) 
		& \!=\! \begin{cases}
			1, \quad  d(t)\!=\!1 \text{ and } \mathit{\textit{c}}(t)\!=\!1,  \\
			\min \! \left\{ \Delta^{AoI}(t) \!+\! 1 ,\Delta_{\text{max}} \right\}\!, \ \text{o/w}.
		\end{cases} \\
		\Delta^{\!\mathit{VAoI}}(t\!+\!1) 
		& \!=\! \begin{cases}
			\mathit{\textit{z}}(t), \quad  d(t)\!=\!1 \text{ and } \mathit{\textit{c}}(t)\!=\!1, \\
			\min \! \left\{ \Delta^{\!\mathit{VAoI}}(t) \!+\! \mathit{\textit{z}}(t) ,\Delta_{\text{max}} \right\}\!, \ \text{o/w}.
		\end{cases} \\
		r(t\!+\!1) & \!=\! 
		\begin{cases}
			1 & \text{w.p. } q, \\
			0 & \text{w.p. } 1\!-\!q. \\
		\end{cases}
	\end{align}
	
	The transition probabilities can be calculated according to \eqref{TransProb_Eqn} and the following equations:
	\begin{align}
		\mathbb{P}&\left[b(t\!+\!1)\big|b(t),r(t),a(t)\right] \\
        &=
		\begin{cases}
			\beta & d(t)\!=\!0,\  b(t\!+\!1)\!=\!b(t)\!+\!1,\ b(t)\!<\!b_{\text{max}},\\
			\bar{\beta} & d(t)\!=\!0,\  b(t\!+\!1)\!=\!b(t),\ b(t)\!<\!b_{\text{max}}\\
            1 & d(t)\!=\!0,\  b(t\!+\!1)\!=\!b(t)\!=\!b_{\text{max}}\\
			\beta & d(t)\!=\!1,\  b(t\!+\!1)\!=\!b(t),\\
			\bar{\beta} & d(t)\!=\!1,\  b(t\!+\!1)\!=\!b(t)\!-\!1.\\
		\end{cases} \notag
	\end{align}
	
	\begin{align}
		\mathbb{P}&\left[\Delta^{AoI}(t\!+\!1) \big|\Delta^{AoI}(t),r(t),a(t)\right] \\
		&= 
		\begin{cases}
			1 & d(t)\!=\!0,\  \Delta^{AoI}(t\!+\!1)\!=\!\Delta^{AoI}(t)\!+\!1 \!\leq\! \Delta_{\text{max}},\\
            1 & d(t)\!=\!0,\  \Delta^{AoI}(t\!+\!1)\!=\!\Delta^{AoI}(t) \!=\! \Delta_{\text{max}},\\
			\bar{p}_s & d(t)\!=\!1,\  \Delta^{AoI}(t\!+\!1)\!=\!\Delta^{AoI}(t)\!+\!1 \!\leq\! \Delta_{\text{max}},\\
            \bar{p}_s & d(t)\!=\!1,\  \Delta^{AoI}(t\!+\!1)\!=\!\Delta^{AoI}(t)\!=\!\Delta_{\text{max}},\\
			p_s & d(t)\!=\!1,\  \Delta^{AoI}(t\!+\!1)\!=\!1.\\
		\end{cases} \notag
	\end{align}
	
	\begin{align}
		\mathbb{P}&\left[\Delta^{\mathit{VAoI}}(t\!+\!1) \big|\Delta^{\mathit{VAoI}}(t),r(t),a(t)\right] \\
		&= 
		\begin{cases}
			p_t & d(t)\!=\!0,\  \Delta^{\mathit{VAoI}}(t\!+\!1)\!=\!\Delta^{\mathit{VAoI}}(t)\!+\!1 \!\leq\! \Delta_{\text{max}},\\
			\bar{p}_t & d(t)\!=\!0,\  \Delta^{\mathit{VAoI}}(t\!+\!1)\!=\!\Delta^{\mathit{VAoI}}(t)\!\leq\! \Delta_{\text{max}},\\
            1 & d(t)\!=\!0,\  \Delta^{\mathit{VAoI}}(t\!+\!1)\!=\!\Delta^{\mathit{VAoI}}(t)\!=\! \Delta_{\text{max}},\\
			p_t\bar{p}_s & d(t)\!=\!1,\  \Delta^{\mathit{VAoI}}(t\!+\!1)\!=\!\Delta^{\mathit{VAoI}}(t)\!+\!1\!\leq\! \Delta_{\text{max}},\\
			\bar{p}_t\bar{p}_s & d(t)\!=\!1,\  \Delta^{\mathit{VAoI}}(t\!+\!1)\!=\!\Delta^{\mathit{VAoI}}(t)\!\leq\! \Delta_{\text{max}},\\
            \bar{p}_s & d(t)\!=\!1,\  \Delta^{\mathit{VAoI}}(t\!+\!1)\!=\!\Delta^{\mathit{VAoI}}(t)\!=\! \Delta_{\text{max}},\\
			p_tp_s & d(t)\!=\!1,\  \Delta^{\mathit{VAoI}}(t\!+\!1)\!=\!1,\\
			\bar{p}_tp_s & d(t)\!=\!1,\  \Delta^{\mathit{VAoI}}(t\!+\!1)\!=\!0.
		\end{cases} \notag
	\end{align}
	
	\begin{align}
		\mathbb{P}&\left[r(t\!+\!1)\right] = 
		\begin{cases}
			q & r(t\!+\!1)=1,\\
			\bar{q} & r(t\!+\!1)=0,\\
		\end{cases} 
	\end{align}
	where $\bar{\beta} \triangleq 1-\beta$, $\bar{p}_t \triangleq 1-p_t$, $\bar{p}_s \triangleq 1-p_s$, and $\bar{q} \triangleq 1-q$. The total probability theorem can also help simplifying \eqref{TransProb_Eqn}:
	\begin{align}
		P&\!\left[s(t\!+\!1)| s(t),a(t)\right]\! \\
		& \!=\! \sum_{\substack{(z,e,c) \in \{0,1\}^3}} P\left[s(t\!+\!1) \big| s(t),a(t),\mathit{\textit{z}}(t),\mathit{\textit{e}}(t),\mathit{\textit{c}}(t)\right] \! P_c P_e P_z, \notag
	\end{align}
	with $P_c\! \triangleq \!\mathbb{P}\left[\mathit{\textit{c}}(t)\!=\!c\right]$, $P_e\! \triangleq \!\mathbb{P}\left[\mathit{\textit{e}}(t)\!=\!e\right]$, and $P_z\! \triangleq \!\mathbb{P}\left[\mathit{\textit{z}}(t)\!=\!z\right]$ given by \eqref{BernoulliProcesses}.

	\section{Analytical Results} 
	In this section, we discuss the existence and structure of the optimal policies. We proceed with the QVAoI as the cost function since it generally encompasses other metrics.
	
	\begin{definition}
		An MDP is weakly accessible if its states can be divided into two subsets, $S_t$ and $S_c$. States in $S_t$ are transient under any stationary policy, and any state $s^\prime$ in $S_c$ can be reached from any state $s$ in $S_c$ under some stationary policy.
	\end{definition}
	
	\begin{proposition}
		\label{WeaklyAccessProp}
		The MDP problem \eqref{MainOptProb_Eqn} is weakly accessible.
	\end{proposition}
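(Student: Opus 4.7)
The plan is to take $S_c = \mathcal{S}$ and $S_t = \emptyset$, and to establish the stronger statement that there exists a single stationary policy under which the induced Markov chain is irreducible on $\mathcal{S}$, which immediately implies weak accessibility. I would work with the randomized stationary policy $\pi_\alpha$ that, whenever $b(t)>0$ and $r(t)=1$, sets $a(t)=1$ with probability $\alpha \in (0,1)$ and $a(t)=0$ otherwise, and is forced to idle when $b(t)=0$ or $r(t)=0$. In the non-degenerate regime where $\beta, p_s, p_t, q \in (0,1)$, the exogenous Bernoulli processes $e(t)$, $c(t)$, $z(t)$, $r(t+1)$ are independent with each outcome occurring with positive probability, so once the realized action $d(t)$ is fixed at a given slot, every joint outcome compatible with the transition equations has positive probability. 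Moreover, at every slot with $b(t)>0$ both $d(t)=0$ (with probability at least $(1-\alpha)q + (1-q)>0$) and $d(t)=1$ (with probability $\alpha q>0$) occur under $\pi_\alpha$, giving full control of the action sequence along any sample path.

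With this in hand, I would show that every target state $s'=(b',\Delta',r')$ is reachable from every source $s=(b,\Delta,r)$ by constructing an explicit finite sample path of positive probability. The construction proceeds in three stages: (i) drive the battery up to $b_{\text{max}}$ by forcing $d(t)=0$ and $e(t)=1$ for consecutive slots, which succeeds in at most $b_{\text{max}}$ slots with positive probability; (ii) reset the age by forcing $d(t)=1$ and $c(t)=1$ in a slot with $r(t)=1$ (feasible since the battery is non-empty and $r(t)=1$ occurs with probability $q>0$), sending $\Delta^{AoI}$ to $1$ and $\Delta^{\mathit{VAoI}}$ to $0$ or $1$ depending on $z(t)$, and then grow the age to $\Delta'$ by forcing $d(t)=0$ for the appropriate number of slots, with $z(t)=1$ (probability $p_t$) in the \textit{VAoI} case and with deterministic unit increments in the \textit{AoI} case; (iii) trim the battery from its current level to $b'$. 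Finally, $r(t+1)$ is drawn independently and takes the desired value $r'$ with positive probability $q$ or $1-q$.

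The main obstacle is the coupling between the battery and the age through the shared action $d(t)$: every transmission that lowers the age also consumes a unit of energy, and every idle slot used to recharge the battery advances the age. The resolution is to use the two decoupling mechanisms that the dynamics provide: a failed transmission $d(t)=1, c(t)=0$ (probability $\bar p_s>0$) drains the battery without refreshing $\Delta^{AoI}$ and without resetting $\Delta^{\mathit{VAoI}}$, while an idle slot with $z(t)=0$ (probability $\bar p_t>0$) recharges the battery without advancing $\Delta^{\mathit{VAoI}}$. Ordering the three stages so that the deterministic drift of $\Delta^{AoI}$ during stage (iii) is precompensated for in stage (ii), and choosing $z(t)$ appropriately in the \textit{VAoI} case, one can hit the three target coordinates simultaneously. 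Concatenating the stages and applying the independence of all exogenous processes yields a finite-length trajectory with strictly positive probability from $s$ to $s'$ under the single policy $\pi_\alpha$, which proves irreducibility and hence weak accessibility.
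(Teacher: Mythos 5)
Your overall strategy is the same as the paper's: take $S_c=\mathcal{S}$, use one randomized stationary policy that plays both actions with positive probability whenever an action is allowed, and exhibit a finite positive-probability path to any target state. The difficulty is in your stage ordering. By charging to $b_{\text{max}}$ first, fixing the age second, and trimming the battery last, you force the battery-trimming stage to consist of $L=b_{\text{max}}-b'$ transmission slots, and when the state variable is the AoI its drift during those slots is deterministic: after the reset it is at least $1$ and it increases by one in every failed-transmission slot. Your ``precompensation'' therefore only works when $\Delta'\geq L+1$; for a target such as $(b',\Delta',r')=(0,1,r')$ with $b_{\text{max}}$ large it is impossible, since the age cannot be grown to $\Delta'-L<1$ in stage (ii) and cannot be frozen in stage (iii). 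The fix is exactly the ordering the paper uses: adjust the battery to $b'$ \emph{first} (idling or transmitting as needed), then hold it constant with positive probability by pairing a transmission with an energy arrival or an idle slot with no arrival, and only then perform the single reset transmission followed by idle slots to grow the age to $\Delta'$. With that ordering no decoupling through failed transmissions is needed.

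This points to the second, related weakening: your construction relies on $\bar p_s>0$ (failed transmissions to drain the battery without touching the age) and on all parameters lying strictly inside $(0,1)$. The proposition as stated carries no such restriction, and the paper later invokes the resulting MDP machinery precisely in the regime $p_s=1$, $q=1$ (the single-battery threshold analysis), where your trim stage is infeasible. The paper's battery-first construction needs only $\beta\in(0,1)$, $p_t>0$, $q>0$ and works verbatim for $p_s=1$. So the idea is right and for the VAoI/QVAoI state (where you can freeze the version lag by choosing $z=0$) your path construction goes through under your assumptions, but as written the argument fails for the AoI-type state in the corner case above and does not cover parameter values the paper explicitly uses; reordering the stages repairs both issues.
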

	
	\begin{proof}
		We demonstrate that any state $s^\prime=\left(b^\prime,\Delta^\prime,r^\prime\right) \in \mathcal{S}$ is reachable from any other state $s=\left(b,\Delta,r\right) \in \mathcal{S}$ under a stationary stochastic policy $\pi$, where the action $a\in\{0,1\}$ at each state is randomly selected with positive probability. The state $r^\prime \in \{0,1\}$ is accessible at each state independently of system actions and can remain unchanged with positive probability (w.p.p.). Therefore, we fix $r^\prime$ as states evolve in the remainder of this proof. The state $b^\prime<b$ is reachable from $b$ w.p.p. by executing action $a=1$ for $(b-b^\prime)$ time slots, and $b^\prime \geq b$ is reachable from $b$ w.p.p. by executing action $a=0$ for $(b^\prime-b)$ slots. Upon reaching state $b^\prime$, regardless of subsequent actions, the battery state can remain unchanged w.p.p. Henceforth, we consider the battery state as $b^\prime$ for the rest of the proof. Similarly, the state $\Delta^\prime<\Delta$ can be attained from $\Delta$ w.p.p. by executing action $a=1$ for one time slot, followed by action $a=0$ for $\Delta^\prime$ slots. Conversely, the state $\Delta^\prime \geq \Delta$ is accessible from $\Delta$ by executing action $a=0$ for $(\Delta^\prime-\Delta)$ slots. 
	\end{proof}
	
	\begin{proposition}
		In the MDP problem \eqref{MainOptProb_Eqn}, the optimal average cost $J^\ast$ achieved by an optimal policy $\pi^\ast$ is the same for all initial states, and it satisfies the Bellman’s equation:
		\vspace{-5pt}
		\begin{equation}
			\label{Bellman_eqn}
			J^\ast\!+\!V(s)\!=\!\min_{a\in\left\{0,1\right\}}{\!\bigg\{C(s,a)+\!\sum_{s^\prime\in \mathcal{S}}{\mathbb{P}\left(s^\prime \big | s,a\right)\!V(s^\prime)}\!\bigg\}},
			\vspace{-4pt}
		\end{equation}
		\begin{equation}
			\label{OptimalAction_eqn}
			\pi^\ast(s) \in  \argmin_{a\in\left\{0,1\right\}}{\!\bigg\{C(s,a)+\!\sum_{s^\prime\in \mathcal{S}}{\mathbb{P}\left(s^\prime \big | s,a\right)\!V(s^\prime)}\!\bigg\}},
			\vspace{-4pt}
		\end{equation}
		where $V(s)$ denotes the value function of the MDP problem, $\mathbb{P}\left(s^\prime \big | s,a\right)$ is the transition function, and $C(s,a)$ is the average cost per slot, given by the transition costs: 
		\begin{equation}
			\label{AvgCost_eqn}
			C(s,a) = \sum_{s^\prime\in \mathcal{S}} {\mathbb{P}\left(s^\prime \big | s,a\right) C\left(s,a,s^\prime\right)}, 
		\end{equation}
		with $C(s,a,s^\prime)=r\Delta^\prime$.
	\end{proposition}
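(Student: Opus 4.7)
The plan is to invoke a standard dynamic programming result for finite-state, finite-action average-cost MDPs under the weak accessibility condition, using Proposition \ref{WeaklyAccessProp} as the key structural input. Specifically, for an infinite-horizon average-cost MDP with finite state space $\mathcal{S}$, finite action set $\mathcal{A}$, and bounded per-stage cost, weak accessibility is known to imply that the optimal average cost $J^\ast$ is independent of the initial state and that there exists a bounded differential value function $V(\cdot)$ such that the pair $(J^\ast, V)$ satisfies the Bellman optimality equation (see, e.g., Bertsekas, \emph{Dynamic Programming and Optimal Control}, Vol.~II, Prop.~4.2.1/4.2.3). So the proof is essentially a verification that the hypotheses of that theorem are met in our setting.

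First I would recall that the state space $\mathcal{S}$ given in the system model is finite, since $b$, $\Delta$, and $r$ all take values in finite sets (with $\Delta$ truncated at $\Delta_{\max}$), and that the action set $\mathcal{A}=\{0,1\}$ is finite. Next, I would observe that the cost $C(s,a,s^\prime)=r^\prime \Delta^\prime$ is bounded by $\Delta_{\max}$, so the expected per-stage cost $C(s,a)$ defined by \eqref{AvgCost_eqn} is uniformly bounded. Then I would cite Proposition \ref{WeaklyAccessProp} to conclude that the MDP is weakly accessible.

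With these three ingredients (finiteness, boundedness, and weak accessibility) in hand, I would directly apply the standard average-cost optimality theorem to conclude that $J^\ast$ does not depend on the initial state $s_0$ in \eqref{MainOptProb_Eqn}, that a bounded relative value function $V(s)$ exists, and that the pair $(J^\ast,V)$ satisfies the Bellman equation \eqref{Bellman_eqn}. The optimal stationary policy characterization \eqref{OptimalAction_eqn} then follows as the greedy selector with respect to $V$ in the right-hand side of \eqref{Bellman_eqn}, which is again part of the same standard theorem.

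I do not anticipate a hard technical step: the only non-routine input, weak accessibility, has already been established. The main thing to be careful about is ensuring that the transition-probability decomposition in \eqref{TransProb_Eqn} makes the Markov property transparent (so that the Bellman recursion indeed applies to $s(t)=[b(t),\Delta(t),r(t)]^T$), and that the cost formulation with $C(s,a,s^\prime)=r^\prime\Delta^\prime$ is correctly averaged to $C(s,a)$ via \eqref{AvgCost_eqn} before being plugged into \eqref{Bellman_eqn}. Beyond that bookkeeping, the result is a direct corollary of the cited average-cost MDP theorem.
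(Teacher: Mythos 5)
Your proposal matches the paper's proof: both verify weak accessibility via Proposition \ref{WeaklyAccessProp} and then invoke the standard average-cost results for finite-state, finite-action MDPs with bounded per-stage cost (Bertsekas, Props.~4.2.1, 4.2.3, 4.2.6) to conclude that $J^\ast$ is independent of the initial state, that a relative value function $V$ satisfying \eqref{Bellman_eqn} exists, and that the greedy selector \eqref{OptimalAction_eqn} is optimal. One bookkeeping slip: the transition cost is $C(s,a,s^\prime)=r\Delta^\prime$ with $r$ the query indicator of the \emph{current} state $s$ (since the lag is $\tau=1$), not $r^\prime\Delta^\prime$; this does not affect boundedness or the applicability of the cited theorem.
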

	
	\begin{proof}
		According to Proposition \ref{WeaklyAccessProp}, the MDP problem \eqref{MainOptProb_Eqn} is weakly accessible. Consequently, by Proposition 4.2.3 in \cite{bertsekas2011dynamic}, the optimal average cost is invariant across all initial states. Moreover, Proposition 4.2.6 in \cite{bertsekas2011dynamic} guarantees the existence of an optimal policy. According to Proposition 4.2.1 in \cite{bertsekas2011dynamic}, identifying $J^\ast$ and $V(s)$ satisfying \eqref{Bellman_eqn} enables determination of the optimal policy using \eqref{OptimalAction_eqn}.
	\end{proof}
	
	The optimal policy, denoted as $\pi^\ast$, relies on $V(s)$, which typically lacks a closed-form solution. Standard methods, like the (Relative) Value Iteration and Policy Iteration algorithms \cite[Chapter 4]{bertsekas2011dynamic}, can solve this optimization problem.
	
	\begin{definition}
		Suppose there exists a $\Delta_{T}(b) > 0$ for each $b$ such that the action $\pi(b,\Delta,r\!=\!1)$ is $1$ for $\Delta \geq \Delta_{T}(b)$, and $0$ otherwise. In this case, $\pi$ is a threshold policy.
	\end{definition}
	
	\begin{theorem}
		The optimal policy of the MDP problem \eqref{MainOptProb_Eqn} is a threshold policy.
	\end{theorem}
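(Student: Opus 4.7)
The plan is to establish the threshold structure by proving structural properties of the value function via the Relative Value Iteration algorithm and then translating these into monotonicity of the state--action advantage function. I would restrict attention throughout to states of the form $(b,\Delta,1)$ with $b\geq 1$, since the problem already forces $a=0$ when $r=0$ or $b=0$, so any threshold statement only needs to be verified on the remaining states.

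First, I would define the value iteration sequence by $V_0\equiv 0$ and
\begin{equation*}
    V_{n+1}(s)=\min_{a\in\{0,1\}}\Big\{C(s,a)+\sum_{s^\prime\in\mathcal{S}}P(s^\prime\mid s,a)\,V_n(s^\prime)\Big\},
\end{equation*}
and prove by induction on $n$ that $V_n(b,\Delta,r)$ is non-decreasing in $\Delta$ for each fixed $b$ and $r$. The inductive step uses two facts that are immediate from the transition law in Section \ref{TranProb_Section}: the expected immediate cost $r\cdot\mathbb{E}[\Delta^\prime\mid s,a]$ is non-decreasing in $\Delta$ under both actions, and the distribution of $\Delta^\prime$ given $(b,\Delta,r,a)$ is stochastically non-decreasing in $\Delta$ (the $\min\{\cdot,\Delta_{\max}\}$ clipping preserves monotonicity because it is itself non-decreasing). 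Since Proposition \ref{WeaklyAccessProp} gives weak accessibility, \cite[Chapter 4]{bertsekas2011dynamic} guarantees that (a relative version of) $V_n$ converges and the limit $V$ inherits monotonicity in $\Delta$.

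Next, for $(b,\Delta,1)$ with $b\geq 1$, I would form the advantage function
\begin{equation*}
    \Xi(b,\Delta)\overset{\triangle}{=}Q\big((b,\Delta,1),1\big)-Q\big((b,\Delta,1),0\big),
\end{equation*}
and show it is non-increasing in $\Delta$. The expected immediate cost contributes $-p_s\Delta$ up to terms independent of $\Delta$, which is strictly decreasing in $\Delta$. For the continuation, I would condition on the channel outcome $c$ and version indicator $z$: on $\{c=0\}$ the $\Delta$-evolution under $a=1$ and $a=0$ is identical (both give $\Delta+z$ clipped), so the continuation difference depends only on the battery, not on $\Delta$; on $\{c=1\}$, transmission resets $\Delta^\prime$ to $\{0,1\}$ while idling lets it drift to $\Delta+z$, and the monotonicity of $V$ in $\Delta$ established above makes the resulting continuation difference non-increasing in $\Delta$. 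Summing the two conditional terms, $\Xi(b,\Delta)$ is non-increasing in $\Delta$, so the set $\{\Delta:\Xi(b,\Delta)\leq 0\}$ is an upward-closed interval $[\Delta_T(b),\Delta_{\max}]$, which is exactly the threshold property required by the definition.

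The main obstacle I expect is the boundary behaviour at $\Delta_{\max}$: the clipping in the transitions for $\Delta^{\mathit{VAoI}}$ breaks the pure shift-invariance of the age dynamics under a $\Delta\mapsto\Delta+1$ perturbation, which is the standard trick used to propagate monotonicity through value iteration. I would handle this by phrasing the induction in terms of stochastic dominance of $\Delta^\prime$ rather than exact coupling, so that saturation at $\Delta_{\max}$ only collapses the dominance to equality and does not reverse any inequality. A related subtlety is the dependence on the battery state: the $\{c=0\}$ branch of the continuation difference can be either sign, but it is constant in $\Delta$ for fixed $b$, which is all that is needed for the threshold $\Delta_T(b)$ to depend on $b$ as the definition allows.
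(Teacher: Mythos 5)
There is a genuine gap in the step where you argue the advantage $\Xi(b,\Delta)$ is non-increasing in $\Delta$, specifically in the $\{c=0\}$ branch. Conditioned on a failed transmission, the next age is $\min\{\Delta+z,\Delta_{\text{max}}\}$ under \emph{both} actions, but the batteries differ, so that branch contributes
\begin{equation*}
\bar{p}_s\big[V(b+e-1,\Delta+z,r^\prime)-V(b+e,\Delta+z,r^\prime)\big],
\end{equation*}
a battery-difference of $V$ evaluated at the age $\Delta+z$. Your claim that this "depends only on the battery, not on $\Delta$" is false in general: it is the function $\delta\mapsto V(b+e-1,\delta,r^\prime)-V(b+e,\delta,r^\prime)$ evaluated at a point that shifts with $\Delta$, and nothing in your induction controls how this difference varies with age. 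Writing out $\Xi(b,\Delta^+)-\Xi(b,\Delta^-)$ for $\Delta^-\leq\Delta^+$, what must be shown is precisely
\begin{equation*}
\bar{p}_s\big[V(b-1,\Delta^+,r)-V(b-1,\Delta^-,r)\big]\;\leq\;V(b,\Delta^+,r)-V(b,\Delta^-,r),
\end{equation*}
a coupled, $\bar{p}_s$-weighted comparison of age-increments at two \emph{different} battery levels. Monotonicity of $V$ in $\Delta$ (your Step 1) only says both increments are nonnegative; it gives no relation between them, and intuitively the increment at the lower battery level can well be the larger one, so the inequality is not automatic and the factor $\bar{p}_s<1$ is essential.

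This coupled inequality is exactly the crux of the paper's argument: it is taken as the induction hypothesis over value-iteration steps (their inequality \eqref{eqn_DV_diff_iter_k}) and propagated through the $\min$ in the Bellman operator by a four-case analysis on which action attains the minimum at the states $(b-1,\Delta^-,r)$ and $(b,\Delta^+,r)$, using $\min\{x,y\}=x+\min\{0,y-x\}$. Your proposal replaces this with plain monotonicity of $V$ in $\Delta$, which handles the $\{c=1\}$ branch and the immediate-cost term correctly but cannot close the $\{c=0\}$ branch. To repair the proof you would need to strengthen your inductive invariant from "$V_n$ non-decreasing in $\Delta$" to the weighted cross-battery property above (or an equivalent submodularity-type condition in $(b,\Delta)$) and carry it through the minimization, which essentially reproduces the paper's induction. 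The stochastic-dominance treatment of the $\Delta_{\text{max}}$ clipping and the restriction to $r=1$, $b\geq 1$ are fine and not where the difficulty lies.
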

	
	\begin{proof}
		See Appendix \ref{Apen1:Theorem1}.
	\end{proof}

    \begin{figure*}[!ht]
        \centering
        \begin{minipage}{0.32\linewidth}
             \includegraphics[scale=0.44,trim={4.4cm 0cm 0cm 0cm}]{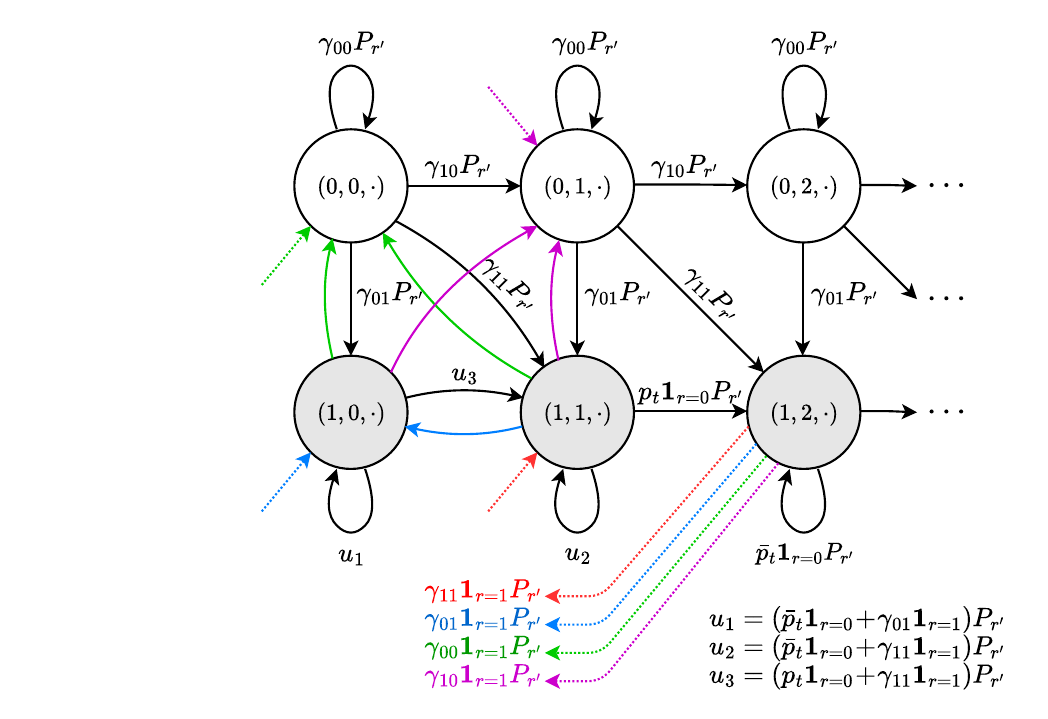}
		\caption{DTMC for $\Delta_T = 0$.}
		\label{fig_DTMC_Bmax1_DT0}
        \end{minipage} \hfill
        \begin{minipage}{0.65\linewidth}
		      \includegraphics[scale=0.44]{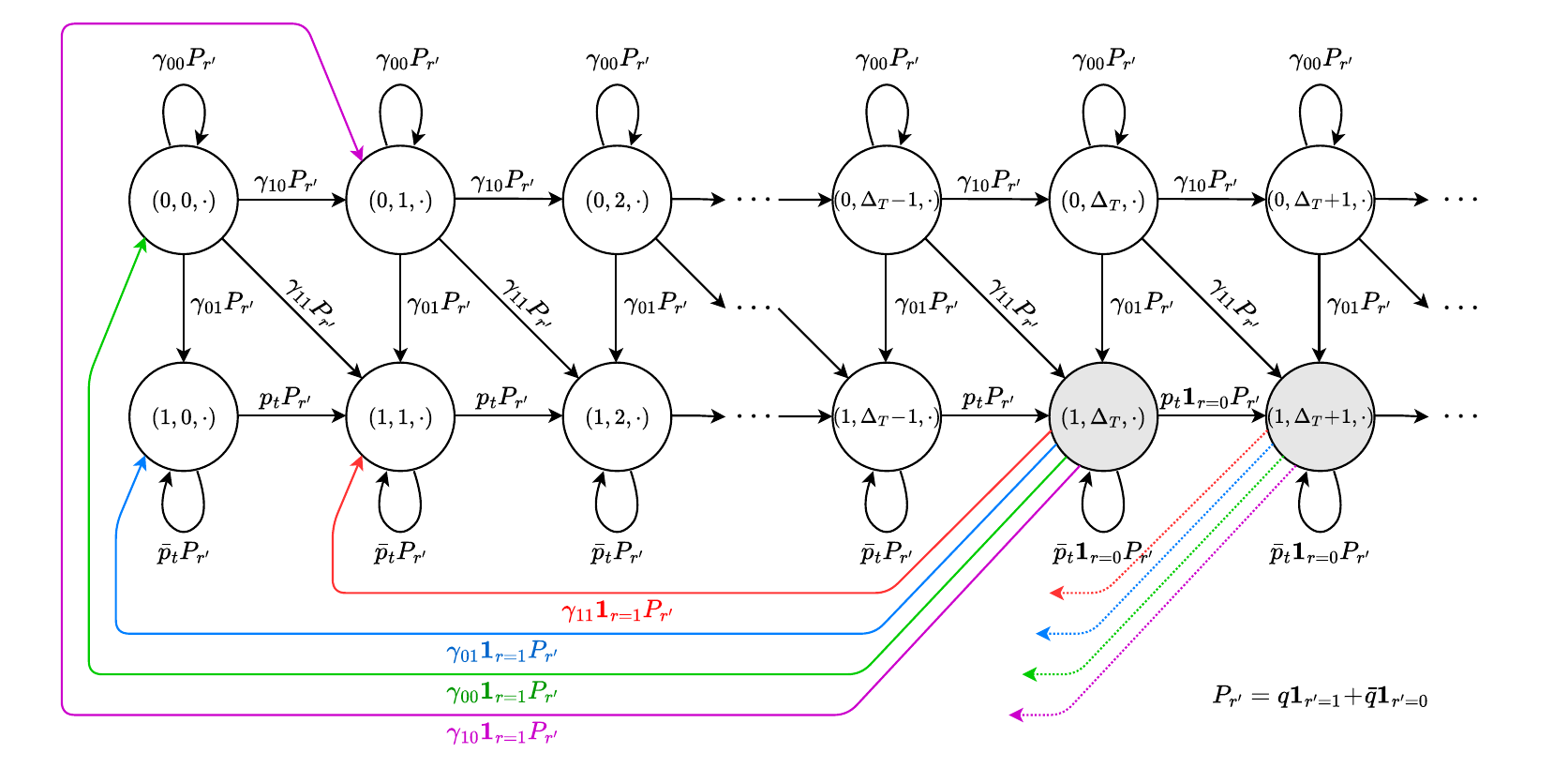}
		      \caption{DTMC for $\Delta_T \geq 1$.}
		      \label{fig_DTMC_Bmax1_DTg1}
        \end{minipage}
    \end{figure*}
	
	\section{Optimal Policy for a Unit-Capacity Battery}
	\label{ThreshPolicies_Bmax1}
	Having established that the optimal policy is a threshold policy, to provide insights into the benefits of semantics-aware information handling, we analyze the average QVAoI of a system with the minimum battery size, i.e., $b_{\text{max}} = 1$. For tractability, we further assume that the channel is reliable ($p_s=1$). To this end, we examine the resulting Markov chains of the system for policies characterized by different threshold values $\Delta_T$, where the action at state $s=(b,\Delta,r)$ is defined as:
		$a(b,\Delta,r) = \mathds{1}_{\{b=1\}} \times \mathds{1}_{\{r=1\}} \times \mathds{1}_{\{\Delta \geq \Delta_T\}}.$
	
    The state space for the Markov chains is defined by $\mathcal{S}_{I} \triangleq \big\{(b,\Delta,r) \big| b\in\{0,1\}, \Delta \in \{0,1,2,\cdots \}, r \in \{0,1\} \big\}$.
	The resulting Markov chains are depicted in Fig. \ref{fig_DTMC_Bmax1_DT0} for $\Delta_T = 0$ and Fig. \ref{fig_DTMC_Bmax1_DTg1} for $\Delta_T \geq 1$. In these figures, for enhanced clarity and compact representation, we define parameters $\gamma_{00} \triangleq \bar{p}_t \bar{\beta}$, $\gamma_{01} \triangleq \bar{p}_t \beta$, $\gamma_{10} \triangleq p_t \bar{\beta}$, and $\gamma_{11} \triangleq p_t \beta$, and we use the same color for the arrows indicating equal transition probabilities. We also use the notation $(b, \Delta, \cdot)$ to represent either $(b, \Delta, r)$ or $(b, \Delta, r')$, where $r$ and $r'$ denote the current and subsequent query states, respectively. The actions $0$ and $1$ are represented by white and grey circles, respectively, at each state $(b,\Delta,1)$. Note that for $(b,\Delta,0)$, the action is always $0$.

    We derive the average update rate, average VAoI, and average QVAoI under a threshold policy by using the stationary distribution of the Markov chains shown in Figs. \ref{fig_DTMC_Bmax1_DT0} and \ref{fig_DTMC_Bmax1_DTg1}. Specifically, the state-stationary probabilities of state $s=(b,\Delta,r)$, denoted by $\mu(s)$, can be obtained by solving the \emph{balance equations}:
	\begin{align}
		\label{BalanceEq_CMDP}
		\bm{\mu} \bm{P_{I}} = \bm{\mu} \text{ and } 
		\sum_{s_i \in \mathcal{S}_{I}} \mu(s_i) = 1,
	\end{align}
	where $\bm{\mu}  \triangleq  \left[ \mu(s_1)\ \mu(s_2)\ \mu(s_3)\ \cdots\ \right]$ is a row vector representing the stationary distributions. $\bm{P_{I}}$ is the transition probability matrix whose $(i,j)^{th}$ element is the transition probability from state $s_i$ to $s_j$, as defined in Section \ref{TranProb_Section} with $b_{\text{max}} = 1$.

    \begin{itemize}
        \item The \emph{average update rate}, denoted as $\Psi$, represents the sum of probabilities over the states where a transmission occurs (i.e., $b=1$, $r=1$, and $\Delta \geq \Delta_T$): 
        \begin{align}
        \label{eqn_AverageRate}
            \Psi = \sum_{\Delta \geq \Delta_T} \mu(1, \Delta, 1).
        \end{align}
	
	   \item The \emph{average VAoI} for these Markov chains is given by:
	   \begin{align}
		\label{eqn_AverageVAoI}
		\Delta^{\mathit{VAoI}}_{Avg} = \sum_{s \in \mathcal{S}_{I}} \Delta \times \mu(s). 
	   \end{align}

        \item The \emph{average QVAoI}, $\Delta_{Avg}^{QVAoI} = \mathbb{E}[r(t-1) \Delta^{VAoI}(t)]$, is obtained by:
        \begin{align}
        \label{eqn_AvgQVAoI}
        \Delta_{Avg}^{QVAoI} = \sum_{s \in \mathcal{S}_{I}} \sum_{s^\prime \in \mathcal{S}_{I}} r \Delta^\prime \times \mathbb{P}(s^\prime \mid s) \times \mu(s),
        \end{align}
        where $s=(b,\Delta,r)$ is the state at time $t-1$ and $s^\prime=(b^\prime,\Delta^\prime,r^\prime)$ is the state at time $t$. 
    \end{itemize}

    \begin{lemma}
        In a setup with a unit-capacity battery, the average update rate for a threshold policy is given by:
        \begin{align}
            \Psi = \frac{q p_t \beta}{q \left(\beta \Delta_T + p_t \rho^{\Delta_T}\right) + \bar{q} p_t \beta},
        \end{align}
        where $\rho  \triangleq  \frac{p_t \bar{\beta}}{1 - \bar{p}_t \bar{\beta}}$.
    \end{lemma}
    \begin{proof}
		See Appendix \ref{Apen_Lemma1}.
	\end{proof}

    \begin{theorem}
    \label{Theorem_AvgQVAoI}
        For a setup with a unit-capacity battery, the average QVAoI and VAoI for a threshold policy are given by \eqref{eqn_AvgQVAoIresults} and \eqref{eqn_AvgVAoIresults}, respectively:

        \begin{align}
            \label{eqn_AvgQVAoIresults}
            \Delta_{Avg}^{QVAoI} \!\!=\! q p_t \!+\! \frac{q \Psi}{p_t \bar{\beta}} \left[ \bar{\beta} \frac{\Delta_T(\Delta_T\!-\!1)}{2} \!+\! \rho^{\Delta_T\!+\!1} \mathcal{G}_{\Delta_T}(\rho) \right]\!,
        \end{align}
        where $\theta  \triangleq  \frac{\bar{q} p_t}{1 - \bar{q}\bar{p}_t}$ and $\mathcal{G}_{\Delta_T}(x)  \triangleq  \frac{\Delta_T - (\Delta_T - 1)x}{(1-x)^2}$.

    \begin{figure*}[!hb]
    \begin{align}
        \label{eqn_AvgVAoIresults}
        \Delta_{Avg}^{VAoI} = 
        \begin{cases}
            \left[ \frac{\rho}{(1-\gamma_{00})(1-\rho)^2} 
            +  \frac{\beta \bar{\beta} \theta}{\bar{q}^2} \frac{1}{\gamma_{10} (1-\gamma_{00}) (\theta-\rho) } \left(\frac{\theta^{2}}{(1-\theta)^2}-\frac{\bar{q}}{\bar{\beta}} \frac{\rho^{2}}{(1-\rho)^2} \right) \right] \Psi,& \Delta_T = 0, \\
            \left[ \frac{\Delta_T(\Delta_T-1)}{2 p_t} + \frac{\rho^{\Delta_T} \mathcal{G}_{\Delta_T}(\rho)}{1-\gamma_{00}} + \frac{\theta \mathcal{G}_{\Delta_T}(\theta)}{\bar{q}} \left( \frac{1}{p_t} - \frac{\rho^{\Delta_T}}{1-\gamma_{00}} \right) + \frac{\beta \rho^{\Delta_T}}{\bar{q} \gamma_{10} (1\!-\!\gamma_{00}) (\frac{1}{\rho} \!-\! \frac{1}{\theta})} \Big( \mathcal{G}_{\Delta_T}(\theta) \!-\! \mathcal{G}_{\Delta_T}(\rho) \Big) \right] \Psi, & \Delta_T \geq 1.
        \end{cases}
    \end{align}
    \end{figure*}
    
    \end{theorem}

    \begin{proof}
		See Appendix \ref{Apen_Theorem_AvgQVAoI}.
	\end{proof}

	\textit{Remark:} In a permanent query setup ($q=1$), the average QVAoI reduces to VAoI; for the thresholds $\Delta_T \in \{0,1\}$, it simplifies to:
		\begin{align}
			\label{AvgQVAoI_B1}
			\Delta^{\mathit{VAoI}}_{Avg}=
			\begin{cases}
				\frac{p_t}{\beta} & \Delta_T=0, \\
				\frac{p_t}{\beta}\left(1-\frac{\bar{p}_t\bar{\beta}\beta^2}{\beta^2+p_t\bar{\beta}(p_t+\beta)}\right) & \Delta_T=1.
			\end{cases}
		\end{align}
	
	In this case, the average VAoI for $\Delta_T=1$ is always less than that for $\Delta_T=0$. However, for $\Delta_T \geq 2$, it may be either lower or higher than that for $\Delta_T=1$, depending on the values of $\Delta_T$, $p_t$, and $\beta$, as further discussed in Section \ref{Sec_NumericalSingle}.

	\section{Numerical Results}
	The MDP problem \eqref{MainOptProb_Eqn} can be solved using the Relative Value Iteration Algorithm (RVIA) to obtain the optimal policies. We evaluate the performance of the AoI-Optimal, VAoI-Optimal, QAoI-Optimal, and QVAoI-Optimal policies with respect to semantic metrics in both pull-based and push-based setups. In pull-based configurations, the QVAoI metric captures the freshness, relevance, and value of the updates within the system, serving as the key performance metric. In push-based setups, the VAoI captures the freshness and relevance of updates within the system, serving as the key performance metric. Given that the optimal policy is a (multi-)threshold policy, we also evaluate the best single-threshold policy for optimizing these metrics (QVAoI and VAoI in pull and push setups, respectively) to determine how closely it approximates the optimal policy. Furthermore, to highlight the advantages of semantics-aware status updating, we include a \emph{greedy policy} as a baseline. Under the greedy policy, the device transmits an update (upon request in a pull-based setup) as soon as energy becomes available.

    In the following simulations, expected values are computed using Monte Carlo methods. We set $\Delta_{\text{max}}$ to a sufficiently high value of $19$ to mitigate the effects of metric truncation while allowing the RVIA algorithm to run effectively over a reasonable state space size.
	
	\subsection{The Impact of Version Generation Probability $p_t$}
	\subsubsection{VAoI-Optimal vs. AoI-Optimal policy in optimizing VAoI}
	The primary advantage of VAoI over AoI is that VAoI considers the dynamics of the information source, leveraging knowledge of the version-generation probability, i.e., $p_t$. In this section, we demonstrate how this advantage can improve the system's performance by maintaining the freshest, most relevant information. We evaluate the average VAoI of the system for the VAoI-Optimal, AoI-Optimal, and greedy policies by varying $p_t$ from $0.2$ to $1$ for $\beta=0.2$, $q=1$, $p_s=1$, and $b_{\text{max}}=15$. As shown in Fig. \ref{fig_VAoIVsPt}, the average VAoI in the system deteriorates as the probability of version generation $p_t$ increases across all three policies. The reason is that the versions of the source become more stale as the energy resources, and thus the possibility of sending updates, remain fixed and unchanged, while new versions are generated with a higher probability. In Fig. \ref{fig_VAoIVsPt}, it is evident that semantics-aware policies, VAoI-Optimal and AoI-Optimal policies, demonstrate superior performance compared to the greedy policy. Additionally, the VAoI-Optimal policy exhibits superior performance compared to the AoI-Optimal policy, especially when $p_t$ is low. 

    \begin{figure}[bt!]
		\centering
		\begin{minipage}{.245\textwidth}
			\includegraphics[width=1.85in,trim={0cm 0cm 0cm 0cm}]{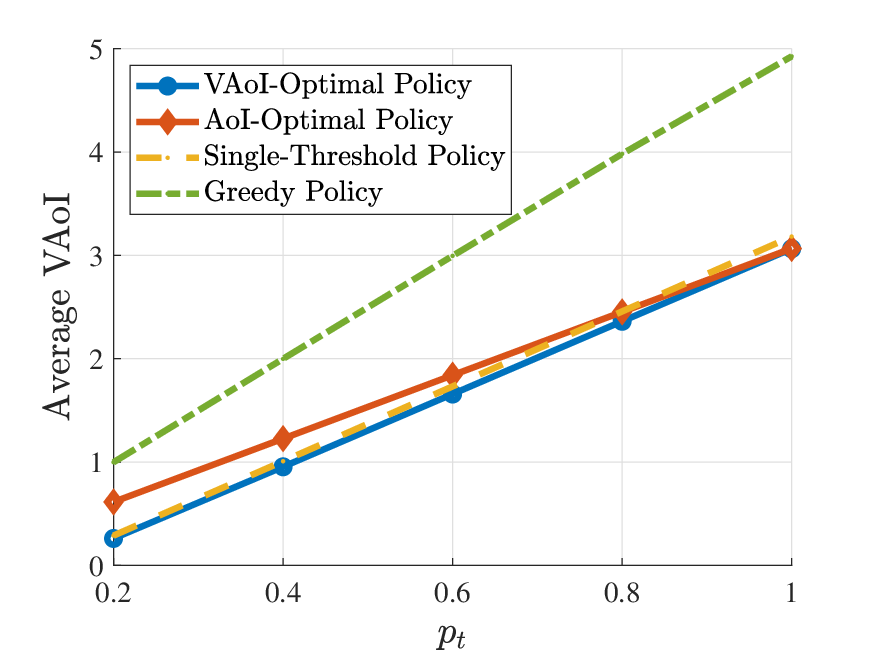}
		      \caption{VAoI vs. $p_t$.}
		      \label{fig_VAoIVsPt}
		\end{minipage}%
		\begin{minipage}{.245\textwidth}
			\includegraphics[width=1.85in,trim={0cm 0cm 0cm 0cm}]{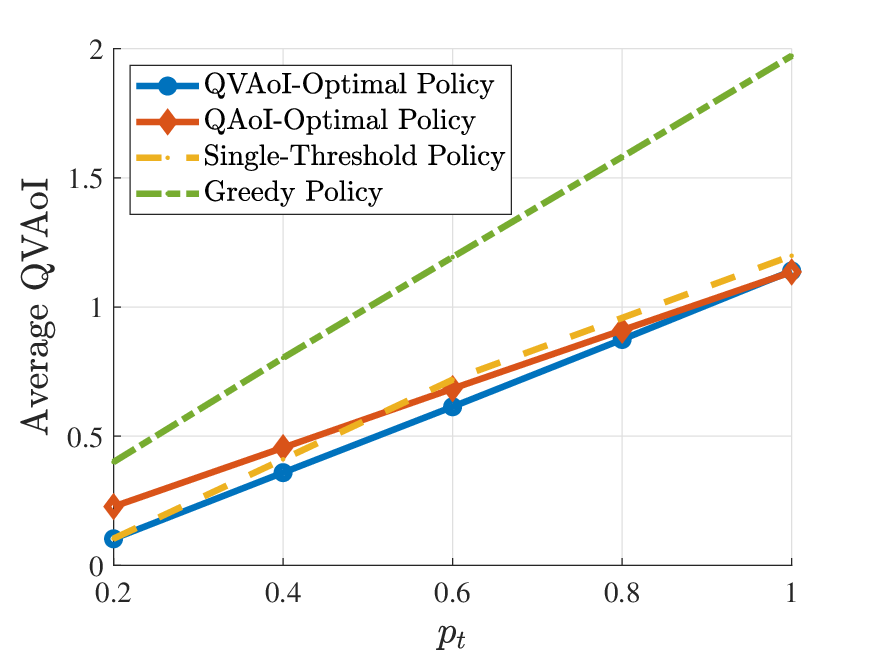}
		      \caption{QVAoI vs. $p_t$.}
		      \label{fig_QVAoIVsPt}
		\end{minipage}%
	\end{figure}

	\subsubsection{QVAoI-Optimal vs. QAoI-Optimal policy in optimizing QVAoI}
	In Fig. \ref{fig_QVAoIVsPt}, the average QVAoI is depicted for QVAoI-Optimal, QAoI-Optimal, and greedy policies as a function of $p_t$, with the parameters $\beta=0.2$, $q=0.5$, $p_s=1$, and $b_{\text{max}}=15$. Here, the semantics-aware policies also demonstrate superior performance compared to the greedy policy, and the QVAoI-Optimal policy outperforms the QAoI-Optimal policy for lower values of $p_t$.
	
	These results emphasize that \emph{incorporating semantics-aware metrics leads to a more effective status updating policy concerning the freshness and significance of information within the system, compared to the greedy policy} (see the red and blue curves vs. the orange curve). Moreover, \emph{the utilization of the VAoI and QVAoI metrics yields enhanced status updating policies concerning the freshness and significance of information, compared to the AoI and QAoI metrics, particularly in scenarios where the source versions evolve slowly, i.e., when $p_t$ is low} (see the blue curve vs. the red curve). It is also evident that the VAoI-Optimal and QVAoI-Optimal policies align with the AoI-Optimal and QAoI-Optimal policies, respectively, when $p_t=1$. 
    In other words, \emph{VAoI and QVAoI represent more general semantic metrics, reducing to AoI and VAoI when the source's version changes in each time slot.}

    \begin{figure}[bt!]
		\centering
		\begin{minipage}{.245\textwidth}
			\includegraphics[width=1.85in,trim={0cm 0cm 0cm 0cm}]{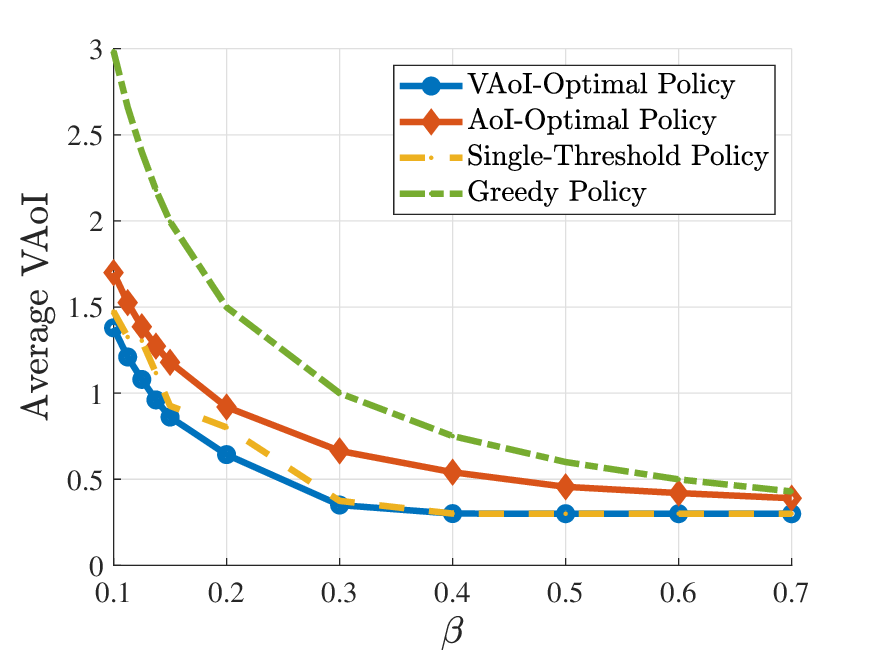}
		      \caption{VAoI vs. $\beta$.}
		      \label{fig_VAoIVsBeta}
		\end{minipage}%
		\begin{minipage}{.245\textwidth}
			\includegraphics[width=1.85in,trim={0cm 0cm 0cm 0cm}]{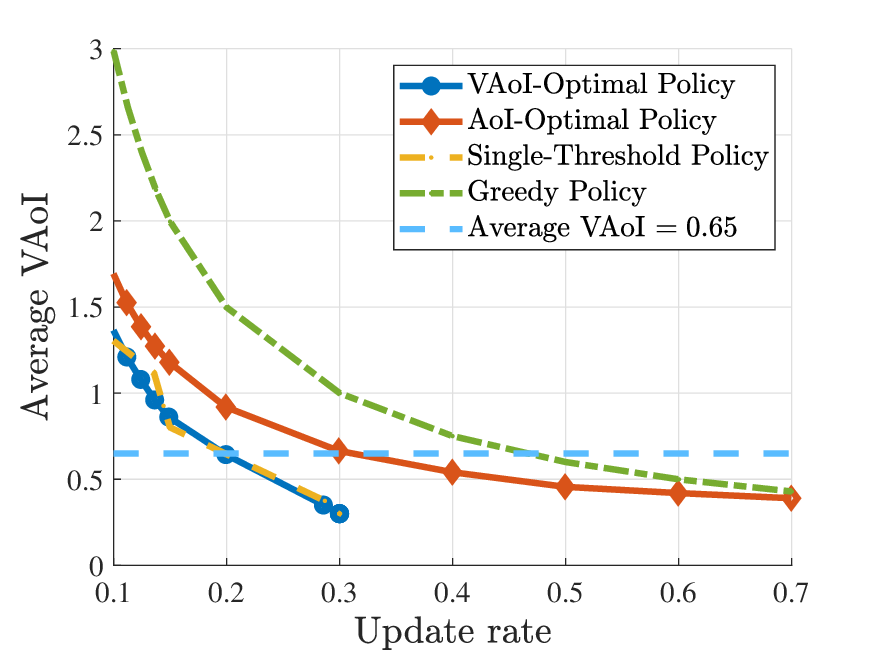}
		      \caption{VAoI vs. update rate.}
		      \label{fig_VAoIVsRate}
		\end{minipage}%
	\end{figure}

	\subsection{The Impact of Energy Arrival Probability $\beta$}
	\subsubsection{VAoI-Optimal vs. AoI-Optimal policy in optimizing VAoI}
	We evaluate the average VAoI of the system across various policies by varying $\beta$ within the range of $0.1$ to $0.7$, while setting the parameters $p_t=0.3$, $q=1$, $p_s=1$, and $b_{\text{max}}=15$. The results are depicted in Fig. \ref{fig_VAoIVsBeta}. First, it can be observed that an increase in the energy arrival probability improves the performance of all three policies, as it provides the device with more energy to send more frequent updates with a higher degree of freedom. For a high energy arrival rate, the three policies perform comparably well. However, as $\beta$ decreases and the system experiences more severe energy limitations, the semantics-aware policies outperform the greedy policy by a considerable margin. Moreover, the VAoI-Optimal policy outperforms the AoI-Optimal policy in optimizing the average VAoI across the system. 
	
	This enhancement becomes more tangible when comparing the average number of updates required by each policy to maintain the same average VAoI. This comparison is illustrated in Fig. \ref{fig_VAoIVsRate}, where the average VAoI is plotted against the update rate, i.e., the average number of updates per slot. In this figure, the horizontal dashed line represents an average VAoI level of $0.65$. The intersection point of this line with the outcomes of the three policies determines the average update rate required for each policy to maintain a performance level of $0.65$ in terms of average VAoI. As shown, the VAoI-Optimal policy requires an update rate of only $0.2$, whereas the AoI-Optimal policy and the greedy policy require update rates of $0.31$ and $0.48$, respectively. This demonstrates improvements of $55\%$ and $140\%$ in average update rate, correspondingly. 
	
	Another result that can be inferred from Fig. \ref{fig_VAoIVsRate} is that, under the VAoI-Optimal policy, the average update rate per slot never exceeds the level of $p_t$ ($0.3$ in the figure). In other words, according to the VAoI-Optimal policy, the update rate per slot is maximized at $p_t$, unless constrained by energy arrivals. 
	
	\subsubsection{QVAoI-Optimal vs. QAoI-Optimal policy in optimizing QVAoI}
	In Figs. \ref{fig_QVAoIVsBeta} and \ref{fig_QVAoIVsRate}, the average QVAoI for QVAoI-Optimal, QAoI-Optimal, and greedy policies is illustrated as a function of $\beta$ and update rate, respectively, with $p_t=0.3$, $q=0.5$, $p_s=1$, and $b_{\text{max}}=15$. In Fig. \ref{fig_QVAoIVsBeta}, we observe that both semantics-aware policies outperform the greedy policy, while the QVAoI-Optimal policy shows superior performance compared to the QAoI-Optimal policy. The average QVAoI is depicted as a function of the update rate in Fig. \ref{fig_QVAoIVsRate}, where the semantics-aware policies require fewer updates to maintain a specific level of average QVAoI. For instance, to achieve an average QVAoI of $0.25$, the QVAoI-Optimal policy results in an update rate of $0.19$, while QAoI-Optimal and greedy policies require $0.28$ and $0.38$, respectively. This demonstrates an improvement of $47\%$ and $100\%$ in the average update rate. In this pull-based scenario, it is evident that the average update rate of the three policies never surpasses the $q$ value. Notably, QVAoI-Optimal exhibits the best performance with a lower update rate.

    \begin{figure}[bt!]
		\centering
		\begin{minipage}{.245\textwidth}
			\includegraphics[width=1.85in,trim={0cm 0cm 0cm 0cm}]{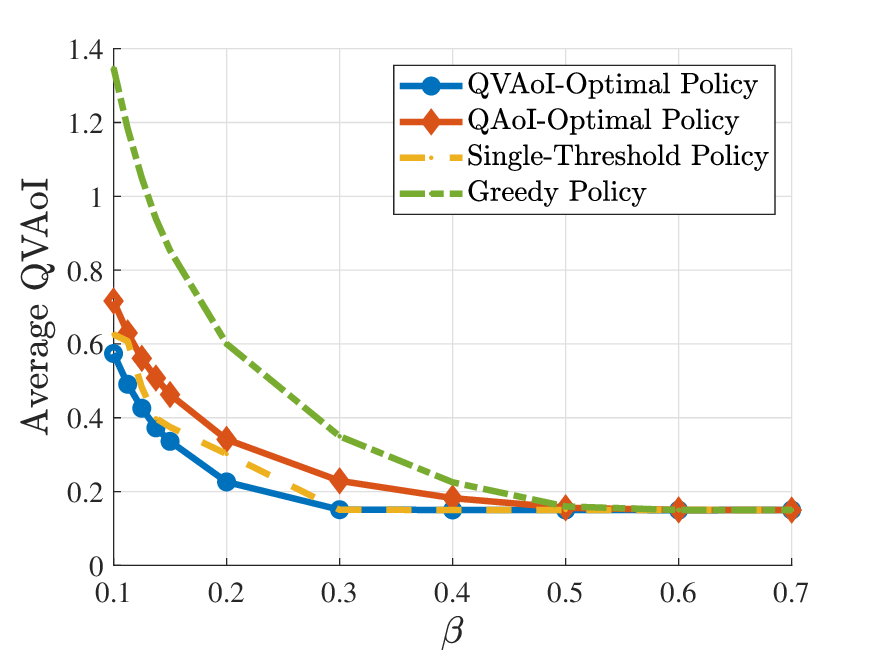}
		      \caption{QVAoI vs. $\beta$.}
		      \label{fig_QVAoIVsBeta}
		\end{minipage}%
		\begin{minipage}{.245\textwidth}
			\includegraphics[width=1.85in,trim={0cm 0cm 0cm 0cm}]{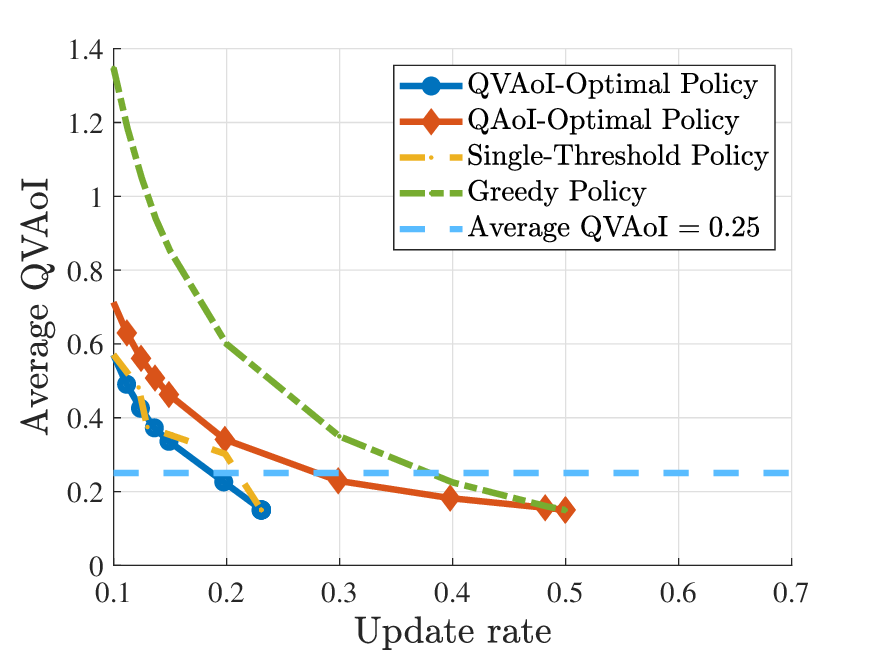}
		      \caption{QVAoI vs. update rate.}
		      \label{fig_QVAoIVsRate}
		\end{minipage}%
	\end{figure}

	These results signify that \emph{in a semantics-aware communication system, we can reduce the number of updates and thus the costs without compromising the conveyed information.} This is particularly important because reducing the number of transmissions in an Energy Harvesting IoT system leads to a significant improvement in energy efficiency. 
	
	\subsection{The Impact of Request Arrival Probability $q$}
	\subsubsection{VAoI-Optimal vs. AoI-Optimal policy in optimizing VAoI}
	The average VAoI of the system as a function of $q$ is depicted in Fig. \ref{fig_VAoIVsQ}, with parameters $\beta=0.2$, $p_t=0.3$, $p_s=1$, and $b_{\text{max}}=15$. In this figure, we observe that $q$ limits the system's performance. For low request arrival probabilities below $\beta$, all three policies perform similarly poorly. However, for high values of $q$, the semantics-aware policies perform better, with the VAoI-Optimal policy demonstrating superior performance compared to the AoI-Optimal policy. This result indicates that a pull-based scenario does not enhance the average VAoI as long as the updates are always valuable to the receiver, or the receiver is always ready to utilize them. In fact, the pull-based scenario is advantageous when there is a limitation on the utilization or value of updates on the receiver side, as discussed in the next section.

    \begin{figure}[bt!]
		\centering
		\begin{minipage}{.245\textwidth}
			\includegraphics[width=1.85in,trim={0cm 0cm 0cm 0cm}]{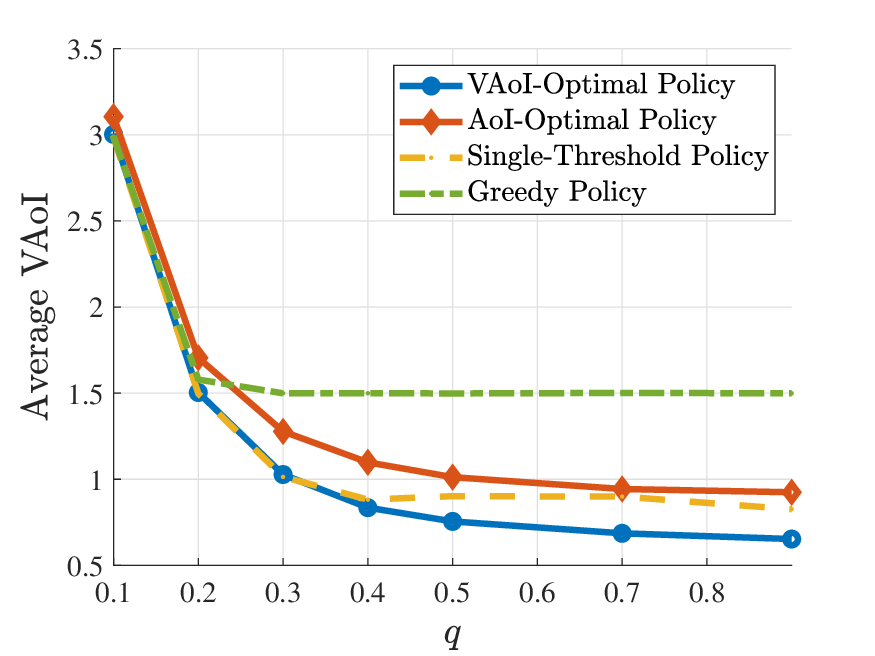}
		      \caption{VAoI vs. $q$.}
		      \label{fig_VAoIVsQ}
		\end{minipage}%
		\begin{minipage}{.245\textwidth}
			\includegraphics[width=1.85in,trim={0cm 0cm 0cm 0cm}]{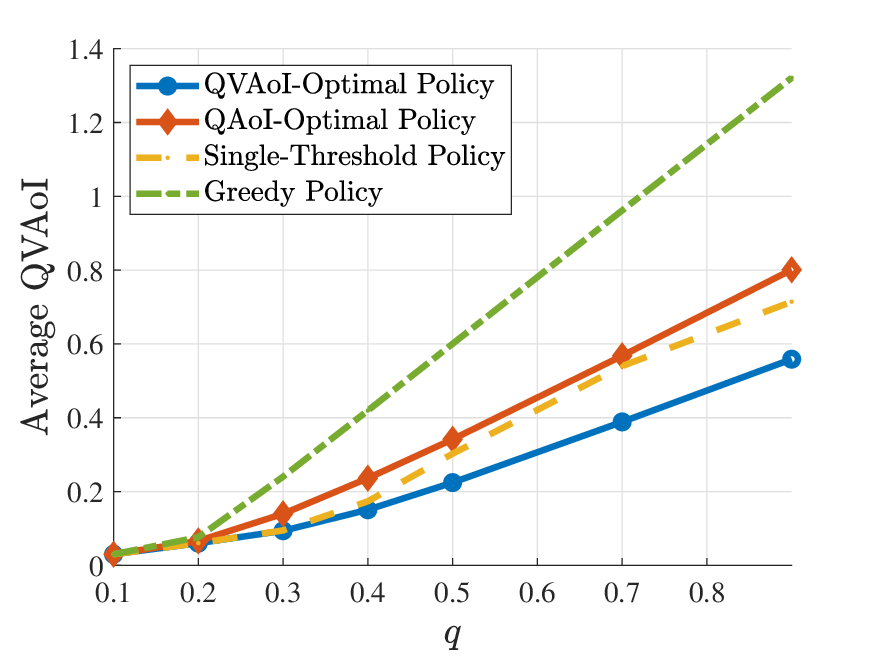}
		      \caption{QVAoI vs. $q$.}
		      \label{fig_QVAoIVsQ}
		\end{minipage}%
	\end{figure}

	\subsubsection{QVAoI-Optimal vs. QAoI-Optimal policy in optimizing QVAoI}
	
	In Fig. \ref{fig_QVAoIVsQ}, the average QVAoI is depicted as a function of $q$. We have fixed the parameters $\beta=0.2$, $p_t=0.3$, $p_s=1$, and $b_{\text{max}}=15$. The average QVAoI increases with $q$ for all three policies, with the QAoI-Optimal policy demonstrating superior performance. 
	The reason behind this increasing behavior is noteworthy. The first reason is that the query process $r(t)$ emerges as a weight in the objective function of the MDP problem \eqref{MainOptProb_Eqn}. As the probability of query arrivals increases, the expected value in this objective function increases. However, even after normalizing the objective function to the expected value of query arrivals, i.e., $\lim_{T\rightarrow\infty} {\frac{1}{T} E\left[ \sum_{t=0}^{T-1} r(t) \right]} = q$, the increasing behavior persists (see Fig. \ref{fig_QVAoIVsQ_norm}). 
	This can be explained by noting that, for higher demands, when the probability of query arrival increases, there are more time slots in which the device must decide whether to transmit. Given the fixed energy arrival rate, the device should set higher thresholds on VAoI values at query instances to limit the number of updates to the maximum permitted by the energy constraint. Consequently, under this policy with higher thresholds, more query instances would result in no action, increasing the average QVAoI. Therefore, increasing pressure on the device and issuing additional requests (beyond $0.3$ in Fig. \ref{fig_QVAoIVsQ_norm}) will degrade the system's average QVAoI. However, concerning the average VAoI, as the probability of query arrivals increases, updates occur frequently enough to reduce the average VAoI, as illustrated in Fig. \ref{fig_VAoIVsQ}.

    \begin{figure}[bt!]
		\centering
		\begin{minipage}{.245\textwidth}
			\includegraphics[width=1.85in,trim={0cm 0cm 0cm 0cm}]{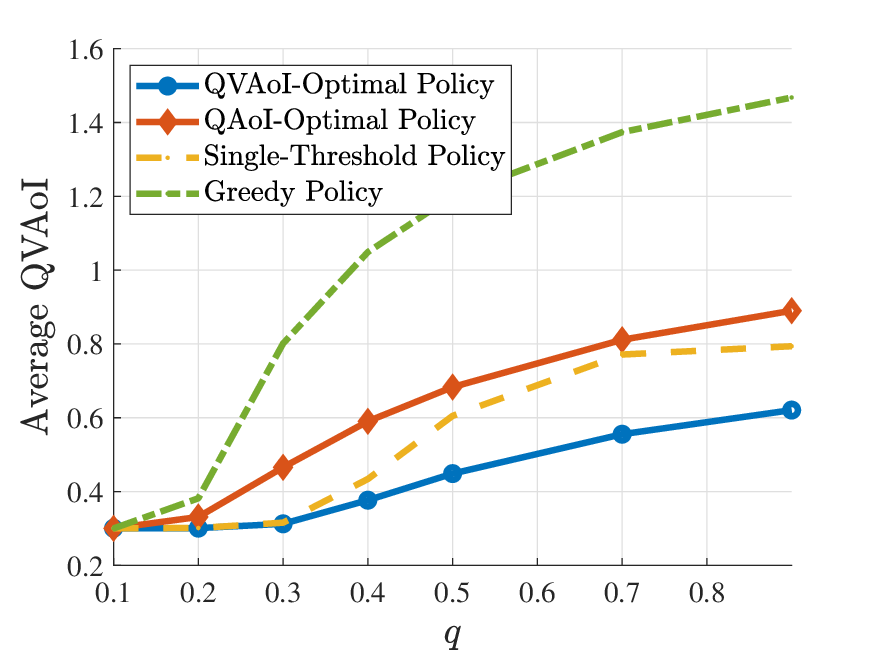}
		\caption{QVAoI (normalized to $q$) vs. $q$.}
		\label{fig_QVAoIVsQ_norm}
		\end{minipage}%
		\begin{minipage}{.245\textwidth}
			\includegraphics[width=1.85in,trim={0cm 0cm 0cm 0cm}]{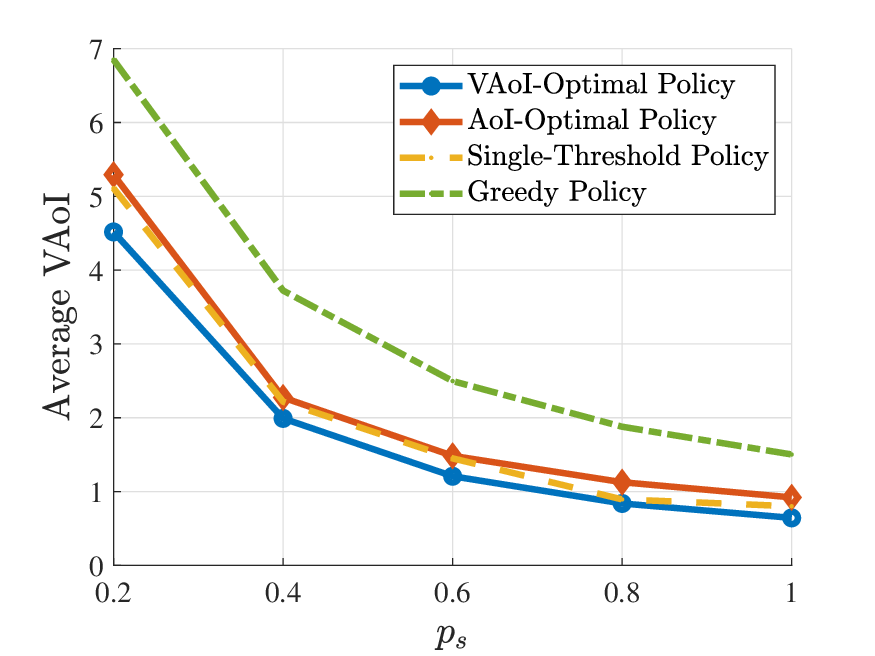}
		\caption{VAoI vs. $p_s$.}
		\label{fig:VAoIVsPs}
		\end{minipage}%
	\end{figure}
	
	\subsection{The Impact of Channel Success Probability $p_s$}
	In Figs. \ref{fig:VAoIVsPs} and \ref{fig:QVAoIVsPs}, the average VAoI and average QVAoI of the system are depicted as functions of $p_s$. We have fixed the parameters $\beta=0.2$, $p_t=0.3$, and $b_{\text{max}}=15$, while $q=1$ and $q=0.5$, respectively. As expected, an increase in channel success probability improves system performance across different policies, with the VAoI-Optimal and QVAoI-Optimal policies achieving the best performance.

    \begin{figure}[bt!]
		\centering
		\begin{minipage}{.245\textwidth}
			\includegraphics[width=1.85in,trim={0cm 0cm 0cm 0cm}]{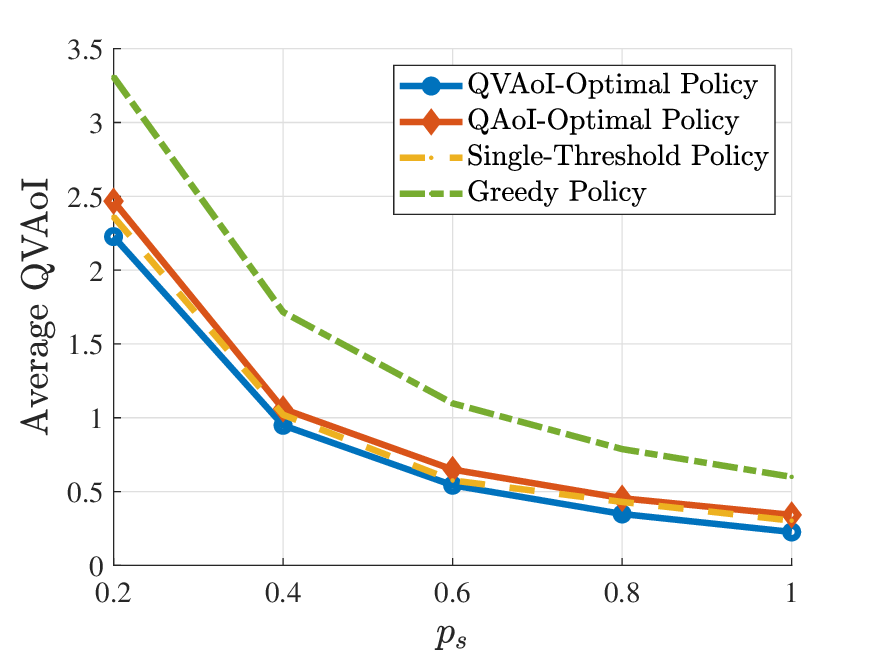}
		\caption{QVAoI vs. $p_s$.}
		\label{fig:QVAoIVsPs}
		\end{minipage}%
		\begin{minipage}{.245\textwidth}
			\includegraphics[width=1.8in,trim={0cm 0cm 0cm 0cm}]{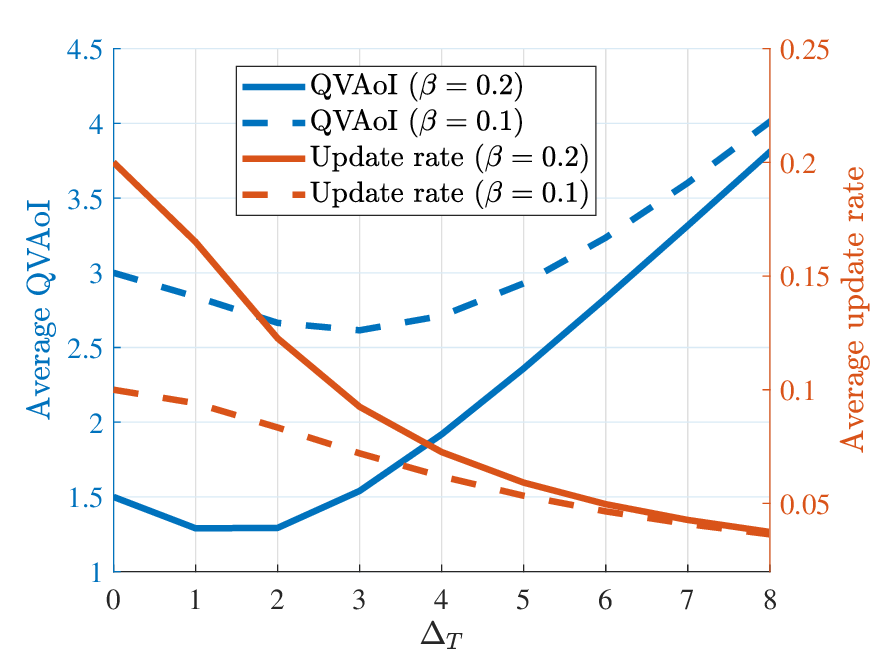}
		  \caption{QVAoI and update rate for unit-capacity battery vs. $\Delta_T$.}
		  \label{fig_QVAoIVsDeltaT}
		\end{minipage}%
	\end{figure}

	\subsection{Average QVAoI for the Unit-Capacity Battery}
    \label{Sec_NumericalSingle}
	We derived the closed-form equations for the average update rate and QVAoI of a threshold policy in a status update system with $b_{\text{max}}\!=\!1$ in Section \ref{ThreshPolicies_Bmax1}. In Fig. \ref{fig_QVAoIVsDeltaT}, the average QVAoI is depicted along the left vertical axis versus different threshold levels $\Delta_T$ for two values of energy arrival probability $\beta$, with $p_t=0.3$ and a high query rate $q=1$. It can be observed that for $\beta=0.2$ (shown with the solid blue curve), the minimum average QVAoI is obtained by setting the threshold $\Delta_T$ equal to $1$. However, by reducing $\beta$ to $0.1$ (the dashed blue curve), the optimal threshold becomes $3$. From this, two conclusions can be drawn. First, the semantics-aware policies outperform the greedy policy (i.e., when $\Delta_T=0$). Second, it is an intriguing finding that, \emph{in scenarios with highly restricted energy arrivals, sending fewer updates results in a fresher system.} This becomes clearer when we compare the update rate per time slot for the optimal policy with a higher threshold to that of a greedy policy with threshold zero, as depicted along the right vertical axis in Fig. \ref{fig_QVAoIVsDeltaT} in red. For instance, when $\beta=0.1$, the update rate corresponding to the optimal policy with a threshold of $3$ is $0.072$, while the greedy policy consumes all the arrived energy and transmits updates as frequently as possible, with an average rate of $0.1$. This occurs while the average QVAoI for the optimal policy is $2.61$ and for the greedy policy is $3$. Thus, \emph{semantics-aware policy results in fewer updates and a fresher system} simultaneously.
    
    Furthermore, for query arrival rates $q \in \{0.1, 0.5, 1\}$, we have depicted the average QVAoI (normalized to $q$) and update rate curves in Fig. \ref{fig_QVAoIVsDeltaT_b02}, for $\beta=0.1$. The average QVAoI for $q=0.1$ is minimized at a threshold of $2$, which is lower than the optimal VAoI threshold ($3$) for $q=1$. This demonstrates that QVAoI, unlike VAoI, restricts updates to query instances. These updates can be triggered at a lower threshold without requiring additional energy (in fact, requiring less) while providing the receiver with updates only when truly needed.

    \begin{figure}[bt!]
        \begin{minipage}{.25\textwidth}
            \includegraphics[width=1.8in,trim={0.5cm 0cm 0cm 0cm}]{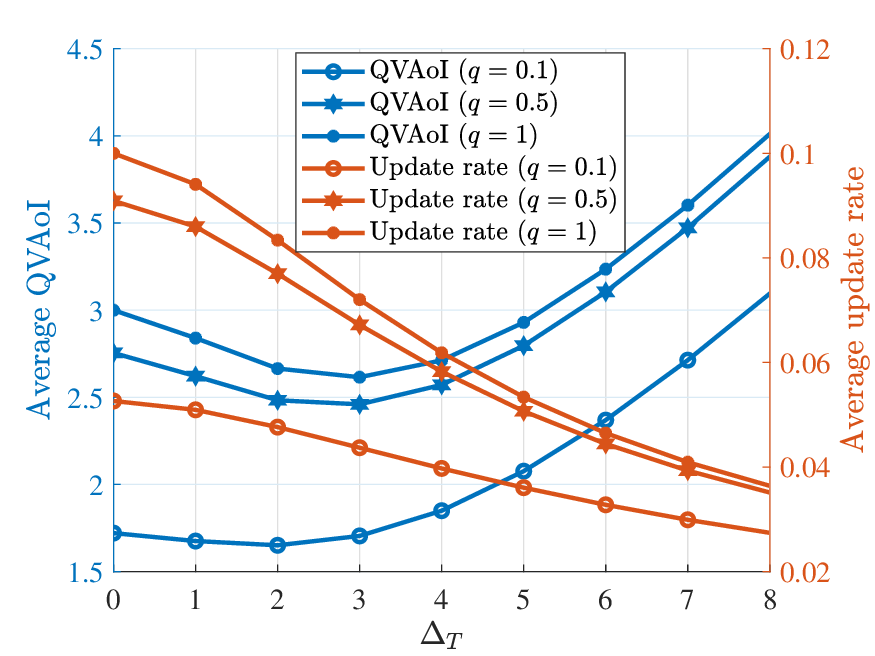}
		      \caption{QVAoI (normalized to $q$) and update rate for unit-capacity battery vs. $\Delta_T$.}
		      \label{fig_QVAoIVsDeltaT_b02}
        \end{minipage}%
		\centering
		\begin{minipage}{.24\textwidth}
			\includegraphics[width=1.8in,trim={0cm 0cm 0cm 0cm}]{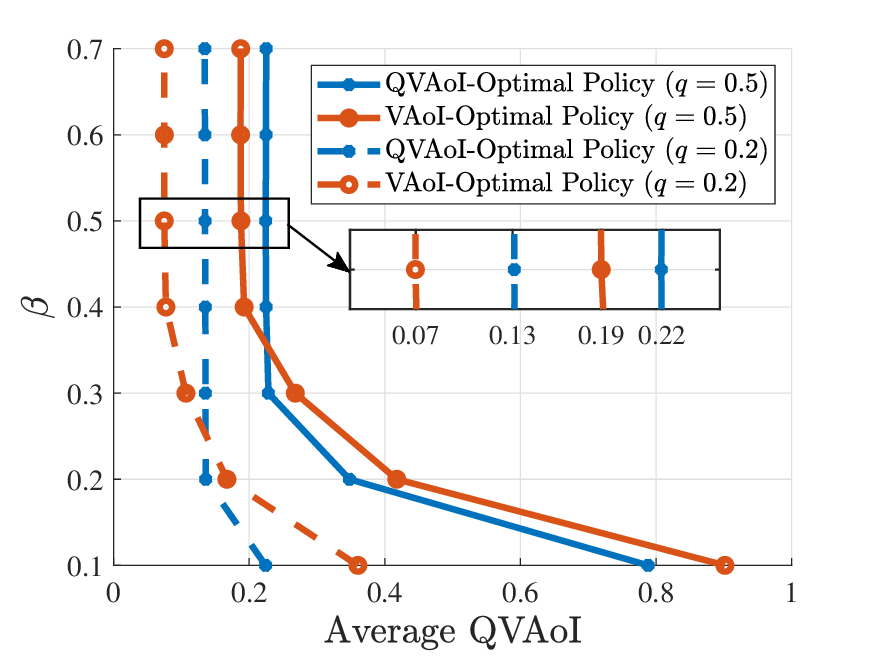}
		\caption{Average QVAoI for QVAoI-Optimal and VAoI-Optimal policies.}
		\label{fig_QVAoI_VAoI_Beta}
		\end{minipage}%
	\end{figure}

    \subsection{Query-Aware vs. Query-Agnostic Policies} 
    As previously discussed, query-aware policies (QAoI- and QVAoI-Optimal) restrict transmissions to query instants, providing updates only when information is valuable within the network. However, a question arises: can proactive transmissions improve system performance in terms of QVAoI? Query-aware policies are most effective in two scenarios: first, when the channel is perfectly reliable ($p_s = 1$), ensuring the delivery of requested information at the precise query instant; and second, in energy-constrained settings (with $p_s < 1$) where the energy arrival rate is lower than the query rate. In the latter case, because energy is scarce, limiting updates to query instants optimizes QVAoI at the receiver. Conversely, when the energy arrival rate is high relative to the query rate, and the channel is unreliable ($p_s < 1$), sufficient energy is available for proactive transmissions outside query instants (in addition to the query instants themselves). In this context, query-agnostic policies (AoI- or VAoI-Optimal) can achieve a lower QVAoI, albeit at the cost of higher energy consumption. This is depicted in Fig. \ref{fig_QVAoI_VAoI_Beta} for parameters $p_t=0.3$, $p_s=0.8$, and $b_{\text{max}}=15$, where the VAoI-Optimal policy curves (red solid and dashed lines for $q=0.5$ and $q=0.2$, respectively) achieve a lower QVAoI than the QVAoI-Optimal policy curves (blue solid and dashed lines for $q=0.5$ and $q=0.2$, respectively) when operating under high energy arrival probabilities.

	\section{Conclusion}
	In this study, we addressed the optimization of freshness and significance of information in a status update system in which an EH device was tasked with scheduling the transmission of measured update packets from an information source to a destination node. We introduced a semantics-aware metric, QVAoI, and identified the QVAoI-Optimal, QAoI-Optimal, VAoI-Optimal, and AoI-Optimal policies by formulating and solving MDP problems. By comparing these with a greedy policy, we demonstrated that the semantics-aware policies delivered superior performance in terms of information freshness and significance. We conducted a closed-form analysis of the unit-capacity battery, offering fundamental insights into system performance under a threshold policy. We also illustrated that the QVAoI-Optimal and VAoI-Optimal policies can achieve fresher and more significant updates from the device, or reduce the number of transmissions without compromising the freshness or significance of information, compared to the QAoI-Optimal and AoI-Optimal policies, respectively. 

	\bibliographystyle{IEEEtran}
	\bibliography{Refs}
	
	\appendices
	
	\section{Proof of Theorem 1}
	\label{Apen1:Theorem1}
	\begin{proof}
		The Bellman equation at state $s\!=\!\left(b,\Delta,r\right)$ is given by:
		
		\begin{align}
			J^\ast\!+\!V(s)\!&=\!\min_{a\in\left\{0,1\right\}} {\bigg\{\underbrace{\sum_{s^\prime\in S} \mathbb{P}\Big[s^\prime \big|s,a\Big] \Big( r\Delta^\prime \!+\! V(s^\prime) \Big)}_{ \triangleq \ Q(s,a)}\bigg\}} \\
			a^\ast(s)&=\argmin_{a\in\left\{0,1\right\}}{Q(s,a)}=
			\begin{cases}
				0, & DV(s)\! \geq \!0,\\
				1, & DV(s)\! < \!0,\\
			\end{cases}
		\end{align}
		where $DV(s)\! \triangleq \!V^1(s)\!-\!V^0(s)$, $V^0(s) \triangleq Q\left(s,0\right)$, and $V^1(s) \triangleq Q\left(s,1\right)$.
		As can be seen, the optimal action $a^\ast(s)$ is related to the sign of $DV(s)$. When $b=0$ or $r=0$, the action $a=0$ is forced and $DV(s)=0$. For other cases where $b>0$ and $r=1$, from Section \ref{TranProb_Section} we have:

		\begin{align}
			\label{eqn_DV_1rate}
			\begin{array}{l}
				V^0(s) \!=\! \sum_{s^\prime\in S} \mathbb{P}\big[s^\prime \big|s,a\!=\!0\big] \Big( \Delta^\prime \!+\! V(s^\prime) \Big) \\
				\!=\! \sum_{r^\prime\in \{0,1\}} \sum_{\substack{z \in \{0,1\} \\ e \in \{0,1\}}} \Big\{ \big(\Delta\!+\!z\big) \!+\! V(b\!+\!e,\Delta\!+\!z,r^\prime) \Big\} P_e P_z P_{r^\prime}, \\
				V^1(s) \!=\! \sum_{s^\prime\in S} \mathbb{P}\big[s^\prime \big|s,a\!=\!1\big] \Big( \Delta^\prime \!+\! V(s^\prime) \Big) \\
				\!=\! \sum_{r^\prime\in \{0,1\}} \sum_{\substack{z \in \{0,1\} \\ e \in \{0,1\}}} \Big\{ \bar{p}_s\big[\left(\Delta\!+\!z\right) \!+\! V(b\!+\!e\!-\!1,\Delta\!+\!z,r^\prime)\big]  \\
				+ p_s \big[ z \!+\! V(b\!+\!e\!-\!1,z,r^\prime) \big] \Big\} P_e P_z P_{r^\prime}.  
			\end{array}
		\end{align}

		In what follows, we demonstrate that $DV(s)=DV(b,\Delta,r)$ is a decreasing (non-increasing) function of $\Delta$, i.e., for $\Delta\!^- \leq \Delta\!^+$, we show that $DV(b,\Delta\!^+,r) \leq DV(b,\Delta\!^-,r)$ or $DV(b,\Delta\!^+,r) - DV(b,\Delta\!^-,r) \leq 0$. This results in the threshold policy because if $DV(s)$ is negative for a $\Delta_T$, it will also be negative for $\Delta \geq \Delta_T$, and the optimal action remains $1$. By simplification of $DV(b,\Delta\!^+,r)$ and $DV(b,\Delta\!^-,r)$ based on \eqref{eqn_DV_1rate} we obtain the following equations:
		
		\begin{align*}
			\begin{array}{l}
				DV(b,\Delta\!^+,r) - DV(b,\Delta\!^-,r) \\
				= V^1(b,\Delta\!^+,r) \!-\! V^1(b,\Delta\!^-,r) \!-\! \big[V^0(b,\Delta\!^+,r) \!-\! V^0(b,\Delta\!^-,r)\big]  \\
				= \sum_{r^\prime, z, e} \bigg\{ \overbrace{p_s\left(\Delta\!^- - \Delta\!^+\right)}^{\leq 0} \\
				\qquad + \bar{p}_s\Big[V(b\!+\!e\!-\!1,\Delta\!^+\!+\!z,r^\prime) - V(b\!+\!e\!-\!1,\Delta\!^-\!+\!z,r^\prime)\Big]  \\
				\qquad - \Big[ V(b\!+\!e,\Delta\!^+\!+\!z,r^\prime) - V(b\!+\!e,\Delta\!^-\!+\!z,r^\prime)\Big] \bigg\} P_e P_z P_{r^\prime} 
			\end{array}
		\end{align*}
		
		Therefore, to verify the inequality $DV(b,\Delta\!^+,r) - DV(b,\Delta\!^-,r) \leq 0$, it is sufficient to show that 
		\begin{multline}
			\bar{p}_s \big[V(b\!-\!1,\Delta\!^+,r) \!-\! V(b\!-\!1,\Delta\!^-,r) \big] \notag \\
			\!-\! \big[ V(b,\Delta\!^+,r) \!-\! V(b,\Delta\!^-,r) \big] \!\leq\! 0,
		\end{multline}
		for $b>0$ and $\Delta\!^- \leq \Delta\!^+$. 
		To proceed with the proof, we use the VIA and mathematical induction. VIA converges to the value function of Bellman's equation regardless of the initial value of $V_0(s)$, i.e., $\lim_{k\rightarrow\infty}{V_k(s)}=V(s),\ \forall s\in S$.
		\begin{align}
			\label{VIA_Iter_k}
			V_{k+1}(s)\!=\!\min_{a\in\left\{0,1\right\}} {\bigg\{\sum_{s^\prime\in S} \mathbb{P}\Big[s^\prime \big|s,a\Big] \Big( r\Delta^\prime \!+\! V_k(s^\prime) \Big)\bigg\}}.
		\end{align}
		
		Therefore, it is sufficient to prove the following inequality for all $k \in \{0,1,2,\cdots\}$:
		\begin{multline}
			\label{eqn_DV_diff_iter_k}
			\bar{p}_s \big[V_k(b\!-\!1,\Delta\!^+,r) \!-\! V_k(b\!-\!1,\Delta\!^-,r) \big] \\
			\!-\! \big[ V_k(b,\Delta\!^+,r) \!-\! V_k(b,\Delta\!^-,r) \big] \leq 0.
		\end{multline}
		
		Assuming $V_0(s) = 0$ for all $s \in S$, equation \eqref{eqn_DV_diff_iter_k} is true for $k=0$. Now, by extending assumption \eqref{eqn_DV_diff_iter_k} for $k > 0$, we aim to prove its validity for $k+1$, i.e.,
		\begin{multline}
			\label{eqn_DV_diff_iter_kp1}
			\bar{p}_s \big[V_{k+1}(b\!-\!1,\Delta\!^+,r) - V_{k+1}(b\!-\!1,\Delta\!^-,r) \big] \\
			- \big[ V_{k+1}(b,\Delta\!^+,r) - V_{k+1}(b,\Delta\!^-,r) \big] \leq 0.
		\end{multline}
		
		the VIA equation \eqref{VIA_Iter_k} is given by $V_{k+1}(s) = \min \{V^0_{k+1}(s),V^1_{k+1}(s)\}$, by defining:
		\begin{align}
			\begin{array}{l}
				V^0_{k+1}(s) \triangleq \sum_{s^\prime\in S} \mathbb{P}\big[s^\prime \big|s,a=0\big] \big( r\Delta^\prime \!+\! V_k(s^\prime) \big), \\
				V^1_{k+1}(s) \triangleq \sum_{s^\prime\in S} \mathbb{P}\big[s^\prime \big|s,a=1\big] \big( r\Delta^\prime \!+\! V_k(s^\prime) \big),
			\end{array}
		\end{align}
		where 
		\begin{align}
			\label{eqn_RVI_VoV1}
			\begin{array}{l}
				V^0_{k+1}(s) \!=\! \sum_{r^\prime, z, e} \Big\{ \big(\Delta\!+\!z\big) \!+\! V_k(b\!+\!e,\Delta\!+\!z,r^\prime) \Big\} P_e P_z P_{r^\prime}, \\
				V^1_{k+1}(s)\!=\! \sum_{r^\prime, z, e} \Big\{ \bar{p}_s\big[\left(\Delta\!+\!z\right) \!+\! V_k(b\!+\!e\!-\!1,\Delta\!+\!z,r^\prime)\big] \\
				+ p_s \big[ z \!+\! V_k(b\!+\!e\!-\!1,z,r^\prime) \big] \Big\} P_e P_z P_{r^\prime} .
			\end{array}
		\end{align}
		
		The inequality \eqref{eqn_DV_diff_iter_kp1} can further be simplified:
		\begin{align}
			\label{eqn_DV_diff_iter_kp1_LastIneq}
			&\bar{p}_s \Big[ \min \{V^0_{k+1}(b\!-\!1,\Delta\!^+,r),V^1_{k+1}(b\!-\!1,\Delta\!^+,r)\} \\
			&\qquad \qquad -\! \min \{V^0_{k+1}(b\!-\!1,\Delta\!^-,r),V^1_{k+1}(b\!-\!1,\Delta\!^-,r)\} \Big] \notag \\ 
			&- \Big[ \min \{V^0_{k+1}(b,\Delta\!^+,r),V^1_{k+1}(b,\Delta\!^+,r)\} \notag \\
			&\qquad \qquad - \min \{V^0_{k+1}(b,\Delta\!^-,r),V^1_{k+1}(b,\Delta\!^-,r)\} \Big] \leq 0 \notag
		\end{align}
		
		We consider four cases to proceed with the proof of \eqref{eqn_DV_diff_iter_kp1_LastIneq}.
		{\small 
			\begin{align*}
				\begin{array}{l}
					\text{\normalsize Case 1. } 
					\begin{cases}
						V^0_{k+1}(b\!-\!1,\Delta\!^-,r) \leq V^1_{k+1}(b\!-\!1,\Delta\!^-,r), \\ 
						V^0_{k+1}(b,\Delta\!^+,r) \leq V^1_{k+1}(b,\Delta\!^+,r).
					\end{cases} \\
					\text{\normalsize Case 2. } 
					\begin{cases}
						V^0_{k+1}(b\!-\!1,\Delta\!^-,r) \leq V^1_{k+1}(b\!-\!1,\Delta\!^-,r), \\ 
						V^0_{k+1}(b,\Delta\!^+,r) > V^1_{k+1}(b,\Delta\!^+,r).
					\end{cases} \\
					\text{\normalsize Case 3. } 
					\begin{cases} 
						V^0_{k+1}(b\!-\!1,\Delta\!^-,r) > V^1_{k+1}(b\!-\!1,\Delta\!^-,r), \\ 
						V^0_{k+1}(b,\Delta\!^+,r) \leq V^1_{k+1}(b,\Delta\!^+,r).
					\end{cases} \\
					\text{\normalsize Case 4. } 
					\begin{cases}
						V^0_{k+1}(b\!-\!1,\Delta\!^-,r) > V^1_{k+1}(b\!-\!1,\Delta\!^-,r), \\
						V^0_{k+1}(b,\Delta\!^+,r) > V^1_{k+1}(b,\Delta\!^+,r).
					\end{cases}
				\end{array}
			\end{align*}
		}
		
		We prove the inequality \eqref{eqn_DV_diff_iter_kp1_LastIneq} for case 1; a similar approach can be utilized to prove the other cases.
		
		\textit{\textbf{Case 1.}} $V^0_{k+1}(b\!-\!1,\Delta\!^-,r) \leq V^1_{k+1}(b\!-\!1,\Delta\!^-,r)$ and $V^0_{k+1}(b,\Delta\!^+,r) \leq V^1_{k+1}(b,\Delta\!^+,r)$. In this case, equation \eqref{eqn_DV_diff_iter_kp1_LastIneq} is further simplified:
		\begin{align}
			&\bar{p}_s \Big[ V^0_{k+1}(b\!-\!1,\Delta\!^+,r) \!-\! V^0_{k+1}(b\!-\!1,\Delta\!^-,r) \Big] \! \notag \\
			&+\! \underbrace{ \bar{p}_s \min \{0,V^1_{k+1}(b\!-\!1,\Delta\!^+,r)\!-\!V^0_{k+1}(b\!-\!1,\Delta\!^+,r)\} }_{\leq 0}  \notag \\
			&- \Big[ V^0_{k+1}(b,\Delta\!^+,r) \!-\! V^0_{k+1}(b,\Delta\!^-,r)\Big] \notag \\
			&+ \underbrace{ \min \{0,V^1_{k+1}(b,\Delta\!^-,r)\!-\!V^0_{k+1}(b,\Delta\!^-,r)\} }_{\leq 0}  \leq 0,
		\end{align}
		where we have used $\min\left\{x,y\right\}=x+\min\left\{0,y-x\right\}$. The second and last terms are negative (non-positive), thus it suffices to show that:
		\begin{multline}
			\bar{p}_s \Big[ V^0_{k+1}(b\!-\!1,\Delta\!^+,r) - V^0_{k+1}(b\!-\!1,\Delta\!^-,r) \Big] \\
			- \Big[ V^0_{k+1}(b,\Delta\!^+,r) - V^0_{k+1}(b,\Delta\!^-,r)\Big]    \leq 0.
		\end{multline}
		
		According to \eqref{eqn_RVI_VoV1}, it can be written as follows:
		\begin{align}
			\begin{array}{l}
				\bar{p}_s \sum_{r^\prime, z, e} \Big\{ \big(\Delta\!^+\!-\!\Delta\!^-\big) \!+\! V_k(b\!+\!e\!-1,\Delta\!^+\!+\!z,r^\prime) \notag \\
				\qquad - V_k(b\!+\!e\!-1,\Delta\!^-\!+\!z,r^\prime) \Big\} P_e P_z P_{r^\prime} \notag \\
				- \sum_{r^\prime, z, e} \Big\{ \big(\Delta\!^+\!-\!\Delta\!^-\big) \!+\! V_k(b\!+\!e\!,\Delta\!^+\!+\!z,r^\prime) \notag \\
				\qquad - V_k(b\!+\!e\!,\Delta\!^-\!+\!z,r^\prime) \Big\} P_e P_z P_{r^\prime} \leq 0 \notag \\
				\Leftrightarrow \sum_{r^\prime, z, e} \Big\{ \overbrace{(1-\bar{p}_s)(\Delta\!^--\Delta\!^+)}^{\leq 0} \notag \\
				+ \bar{p}_s \big[ V_k(b\!+\!e\!-1,\Delta\!^+\!+\!z,r^\prime) - V_k(b\!+\!e\!-1,\Delta\!^-\!+\!z,r^\prime) \big] \notag \\
				- \big[ V_k(b\!+\!e\!,\Delta\!^+\!+\!z,r^\prime) \!-\! V_k(b\!+\!e\!,\Delta\!^-\!+\!z,r^\prime) \big] \Big\} P_e P_z P_{r^\prime} \leq 0 \notag
			\end{array}
		\end{align}
		where the first term in the summation is negative since $\Delta\!^- \leq \Delta\!^+$ and $1-\bar{p}_s=p_s > 0$. The remaining terms are also negative according to the induction assumption \eqref{eqn_DV_diff_iter_k}, and the proof is complete.
	\end{proof}

    \section{Proof of Lemma 1}
	\label{Apen_Lemma1}
    \begin{proof}
        The balance equations for the Markov chain shown in Figs. \ref{fig_DTMC_Bmax1_DT0} and \ref{fig_DTMC_Bmax1_DTg1} can be written directly for each state. By solving the balance equations, we obtain the steady-state probability of each state as a function of $\Psi$ as follows:
            \begin{align*}
				&\mathcal{E}_1\!:\! \begin{cases}
					\mu(0,0,0)\!=\!\bar{q} \left[\gamma_{00} \mu(0,0,:)\!+\!\gamma_{00} \Psi \right], \\
                    \mu(0,0,1)\!=q \left[\gamma_{00} \mu(0,0,:)\!+\!\gamma_{00} \Psi \right], 
                \end{cases} \\
				&\mathcal{E}_2\!:\! \begin{cases}
                    \mu(0,1,0)\!=\!\bar{q} \left[ \gamma_{00}\mu(0,1,:)\!+\!\gamma_{10}\mu(0,0,:)\!+\!\gamma_{10} \Psi \right], \\ 
                    \mu(0,1,1)\!=\!q \left[ \gamma_{00}\mu(0,1,:)\!+\!\gamma_{10}\mu(0,0,:)\!+\!\gamma_{10} \Psi \right],
                \end{cases} \\
                &\mathcal{E}_3\!:\! \begin{cases}
					\mu(0,\Delta,0)\!=\!\bar{q} \left[ \gamma_{00}\mu(0,\Delta,:) \!+\! \gamma_{10}\mu(0,\Delta\!-\!1,:) \right], \ \ \Delta \!\geq\! 2, \\
                    \mu(0,\Delta,1)\!=\!q \left[ \gamma_{00}\mu(0,\Delta,:) \!+\! \gamma_{10}\mu(0,\Delta\!-\!1,:) \right], \ \ \Delta \!\geq\! 2,
                \end{cases} \\
                &\mathcal{E}_4\!:\! \begin{cases}
					\mu(1,0,0)\!=\! \bar{q} \left[ \gamma_{01}\mu(0,0,:)\!+\!\bar{p}_t\mu(1,0,:)\!+\!\gamma_{01}\Psi \right], \\
                    \mu(1,0,1)\!=\! q \left[ \gamma_{01}\mu(0,0,:)\!+\!\bar{p}_t\mu(1,0,:)\!+\!\gamma_{01}\Psi \right],
                \end{cases} \\
                &\mathcal{E}_5\!:\! \begin{cases}
					\mu(1,1,0)\!=\bar{q} \big[ \!\gamma_{11}\mu(0,0,:)\!+\!\gamma_{01}\mu(0,1,:)+\!p_t\mu(1,0,:) \\
					\qquad \qquad +\bar{p}_t\mu(1,1,:) \!+\!\gamma_{11} \Psi \big], \\
                    \mu(1,1,1)\!=q \big[ \!\gamma_{11}\mu(0,0,:)\!+\!\gamma_{01}\mu(0,1,:)+\!p_t\mu(1,0,:) \\
					\qquad \qquad +\bar{p}_t\mu(1,1,:) \!+\!\gamma_{11} \Psi \big],
                \end{cases} \\
                &\mathcal{E}_6\!:\! \begin{cases}
					\mu(1,\Delta,0)\!=\!\bar{q} \big[\gamma_{11}\mu(0,\Delta\!-\!1,:)\!+\!\gamma_{01}\mu(0,\Delta,:) \\
					\qquad \qquad + p_t\mu(1,\Delta\!-\!1,:)\!+\!\bar{p}_t\mu(1,\Delta,:) \big], \ \ 2 \!\leq\! \Delta \!<\! \Delta_T, \\
                    \mu(1,\Delta,1)\!=\!q \big[\gamma_{11}\mu(0,\Delta\!-\!1,:)\!+\!\gamma_{01}\mu(0,\Delta,:) \\
					\qquad \qquad + p_t\mu(1,\Delta\!-\!1,:)\!+\!\bar{p}_t\mu(1,\Delta,:) \big], \ \ 2 \!\leq\! \Delta \!<\! \Delta_T,
                \end{cases} \\
                &\mathcal{E}_7\!:\! \begin{cases}
					\mu(1,\Delta_T,0)\!=\! \bar{q} \big[ \gamma_{11}\mu(0,\Delta_T\!-\!1,:)\!+\!\gamma_{01}\mu(0,\Delta_T,:) \\
					\qquad \qquad + p_t\mu(1,\Delta_T\!-\!1,:) \!+\!  \bar{p}_t\mu(1,\Delta_T,0) \big], \\
                    \mu(1,\Delta_T,1)\!=\! q \big[ \gamma_{11}\mu(0,\Delta_T\!-\!1,:)\!+\!\gamma_{01}\mu(0,\Delta_T,:) \\
					\qquad \qquad + p_t\mu(1,\Delta_T\!-\!1,:) \!+\!  \bar{p}_t\mu(1,\Delta_T,0) \big],
                \end{cases} \\ 
                &\mathcal{E}_8\!:\! \begin{cases}
					\mu(1,\Delta,0)\!=\! \bar{q} \big[ \gamma_{11}\mu(0,\Delta\!-\!1,:) \!+\! \gamma_{01}\mu(0,\Delta,:) \\
                    \qquad \qquad + p_t\mu(1,\Delta\!-\!1,0) \!+\!  \bar{p}_t\mu(1,\Delta,0) \big],\ \ \Delta \!>\! \Delta_T, \\
                    \mu(1,\Delta,1)\!=\! q \big[ \gamma_{11}\mu(0,\Delta\!-\!1,:) \!+\! \gamma_{01}\mu(0,\Delta,:) \\
                    \qquad \qquad + p_t\mu(1,\Delta\!-\!1,0) \!+\!  \bar{p}_t\mu(1,\Delta,0) \big],\ \ \Delta \!>\! \Delta_T.
				\end{cases}
			\end{align*}
        
        For compactness, we have used the notation $\mu(b,\Delta,:)$ to represent the sum $\mu(b,\Delta,0) + \mu(b,\Delta,1)$. The equations above are valid for $\Delta_T \geq 2$, though $\mathcal{E}_6$ may be omitted when $\Delta_T = 2$. For the case where $\Delta_T = 1$, equations $\mathcal{E}_1$ through $\mathcal{E}_4$ and $\mathcal{E}_8$ remain valid, $\mathcal{E}_6$ and $\mathcal{E}_7$ are omitted, and $\mathcal{E}_5$ is modified as follows:
        \begin{align*}
            &\mathcal{E}_5^\prime\!:\! \begin{cases}
					\mu(1,1,0)\!=\bar{q} \big[ \!\gamma_{11}\mu(0,0,:)\!+\!\gamma_{01}\mu(0,1,:)+\!p_t\mu(1,0,:) \\
					\qquad \qquad +\bar{p}_t\mu(1,1,0) \!+\!\gamma_{11} \Psi \big], \\
                    \mu(1,1,1)\!=q \big[ \!\gamma_{11}\mu(0,0,:)\!+\!\gamma_{01}\mu(0,1,:)+\!p_t\mu(1,0,:) \\
					\qquad \qquad +\bar{p}_t\mu(1,1,0) \!+\!\gamma_{11} \Psi \big],
                \end{cases}
        \end{align*}
        
        For $\Delta_T = 0$, equations $\mathcal{E}_1$ through $\mathcal{E}_3$ remain valid, $\mathcal{E}_8$ remains valid for $\Delta \geq 2$, $\mathcal{E}_6$ and $\mathcal{E}_7$ are omitted, and $\mathcal{E}_4$ and $\mathcal{E}_5$ are modified as follows:
        \begin{align*}
            &\mathcal{E}''_4\!:\! \begin{cases}
					\mu(1,0,0)\!=\! \bar{q} \left[ \gamma_{01}\mu(0,0,:)\!+\!\bar{p}_t\mu(1,0,0)\!+\!\gamma_{01}\Psi \right], \\
                    \mu(1,0,1)\!=\! q \left[ \gamma_{01}\mu(0,0,:)\!+\!\bar{p}_t\mu(1,0,0)\!+\!\gamma_{01}\Psi \right],
                \end{cases} \\
            &\mathcal{E}''_5\!:\! \begin{cases}
					\mu(1,1,0)\!=\bar{q} \big[ \!\gamma_{11}\mu(0,0,:)\!+\!\gamma_{01}\mu(0,1,:)+\!p_t\mu(1,0,0) \\
					\qquad \qquad +\bar{p}_t\mu(1,1,0) \!+\!\gamma_{11} \Psi \big], \\
                    \mu(1,1,1)\!=q \big[ \!\gamma_{11}\mu(0,0,:)\!+\!\gamma_{01}\mu(0,1,:)+\!p_t\mu(1,0,0) \\
					\qquad \qquad +\bar{p}_t\mu(1,1,0) \!+\!\gamma_{11} \Psi \big],
                \end{cases}
        \end{align*}

        We solve the equation pairs $\mathcal{E}_1$ through $\mathcal{E}_8$ sequentially to obtain the stationary probability of the states as a function of $\Psi$. Subsequently, we solve the following equation to derive $\Psi$.
        \begin{align}
            \label{eqn_SumOneBmax1}
            \sum_{s_i \in \mathcal{S}_{I}} \mu(s_i) = 1.
        \end{align}
        
        Note that, in the equation pairs $\mathcal{E}_1$-$\mathcal{E}_8$, we have $\mu(b,\Delta,0)=\frac{\bar{q}}{q} \mu(b,\Delta,1)$; therefore:
        \begin{align}
            \sum_{\Delta=\Delta_T}^{\infty} \mu(1,\Delta,0) = \frac{\bar{q}}{q} \sum_{\Delta=\Delta_T}^{\infty} \mu(1,\Delta,1) = \frac{\bar{q}}{q} \Psi.
        \end{align}
        or equivalently:
        \begin{align}
            \label{eqn_SumDTtoInfb1}
            \sum_{\Delta=\Delta_T}^{\infty} \mu(1,\Delta,:) = \left(1+\frac{\bar{q}}{q}\right)\Psi = \frac{1}{q} \Psi.
        \end{align}
        We use \eqref{eqn_SumDTtoInfb1} to simplify \eqref{eqn_SumOneBmax1} as follows for $\Delta_T \geq 2$:
        \begin{gather}
                \sum_{\Delta=0}^{\infty}  \mu(0,\Delta,:) \!+\! \sum_{\Delta=0}^{\Delta_T-1} \mu(1,\Delta,:) \!+\! \sum_{\Delta=\Delta_T}^{\infty} \mu(1,\Delta,:) = 1 \notag \\
                \Rightarrow \sum_{\Delta=0}^{\infty}  \mu(0,\Delta,:) + \sum_{\Delta=0}^{\Delta_T-1} \mu(1,\Delta,:) + \frac{1}{q} \Psi = 1.
                \label{eqn_PsiEqnDTg2}
        \end{gather}

        Similarly, for $\Delta_T \in \{0,1\}$, we have:
        \begin{gather}
                \sum_{\Delta=0}^{\infty}  \mu(0,\Delta,:) + \mu(1,0,:) + \frac{1}{q} \Psi = 1, \label{eqn_PsiEqnDT1} \\
                \sum_{\Delta=0}^{\infty}  \mu(0,\Delta,:) + \frac{1}{q} \Psi = 1. \label{eqn_PsiEqnDT0}
        \end{gather}
        We solve equations $\mathcal{E}_1$ through $\mathcal{E}_6$ to facilitate the solution of \eqref{eqn_PsiEqnDTg2}--\eqref{eqn_PsiEqnDT0} and derive $\Psi$.

        Solving the system of two equations in $\mathcal{E}_1$, we obtain $\mu(0,0,0)$ and $\mu(0,0,1)$ as functions of $\Psi$. Specifically, $\mu(0,0,0)$ is derived by substituting $\mu(0,0,1) = \frac{q}{\bar{q}} \mu(0,0,0)$ from the second equation into the first. (This substitution, $\mu(b,\Delta,1) = \frac{q}{\bar{q}} \mu(b,\Delta,0)$, is also applied to equation pairs $\mathcal{E}_2$ through $\mathcal{E}_8$). After simplification, the result for $\mu(0,0,0)$ is substituted back into the expression for $\mu(0,0,1)$ to .:
        \begin{align}
            \mu(0,0,0) = \frac{\bar{q} \gamma_{00}}{1 - \gamma_{00}} \Psi, \quad
            \mu(0,0,1) = \frac{q \gamma_{00}}{1 - \gamma_{00}} \Psi,
        \end{align}

        Similarly, by solving the system of equations in $\mathcal{E}_2$, we obtain $\mu(0,1,0)$ and $\mu(0,1,1)$ as functions of $\Psi$:
        \begin{align}
            \label{eqn_Mu010} 
            \mu(0,1,0) = \frac{\bar{q} \gamma_{10}}{\left(1 - \gamma_{00}\right)^2} \Psi = \frac{\bar{q} \rho}{1 - \gamma_{00}} \Psi, \\
            \label{eqn_Mu011} 
            \mu(0,1,1) = \frac{q \gamma_{10}}{\left(1 - \gamma_{00}\right)^2} \Psi = \frac{q \rho}{1 - \gamma_{00}} \Psi,
        \end{align}
        where we have defined the parameter $\rho$ as:
        \begin{align}
            \rho  \triangleq  \frac{\gamma_{10}}{1 - \gamma_{00}} = \frac{p_t \bar{\beta}}{1 - \bar{p}_t \bar{\beta}}.
        \end{align} 

       To solve the equations in $\mathcal{E}_3$, we define the function $\mathcal{X}(\Delta)$ for $\Delta \geq 1$ as:
       \begin{align}
        \mathcal{X}(\Delta)\! \triangleq \!\mu(0, \Delta, :)\!=\!\mu(0, \Delta, 0) \!+\! \mu(0, \Delta, 1), 
       \end{align}
       where $\mathcal{X}(1)=\frac{\rho}{1 - \gamma_{00}} \Psi$ is obtained from \eqref{eqn_Mu010} and \eqref{eqn_Mu011}. For $\Delta \geq 2$, from $\mathcal{E}_3$, we have $\mathcal{X}(\Delta) = \rho \mathcal{X}(\Delta-1)$; therefore:
        \begin{align}
            \mathcal{X}(\Delta) &= \frac{ \rho^\Delta}{1 - \gamma_{00}}\Psi, \quad \Delta \geq 1, \\
            \mu(0,\Delta,0) &= \bar{q} \rho \mathcal{X}(\Delta-1) = \bar{q} \mathcal{X}(\Delta),  \quad \Delta \geq 1,\\
            \mu(0,\Delta,1) &= q \rho \mathcal{X}(\Delta-1) = q \mathcal{X}(\Delta), \quad \Delta \geq 1.
        \end{align}

        Solving the system of two equations in $\mathcal{E}_4$, we obtain $\mu(1,0,0)$ and $\mu(1,0,1)$ as functions of $\Psi$:
        \begin{align}
            \mu(1,0,0) \!=\! \frac{\bar{q} \gamma_{01}}{p_t \left(1 \!-\! \gamma_{00} \right)} \Psi, \quad 
            \mu(1,0,1) \!=\! \frac{q \gamma_{01}}{p_t \left(1 \!-\! \gamma_{00} \right)} \Psi.
        \end{align}

        Solving the system of two equations in $\mathcal{E}_5$, we obtain $\mu(1,1,0)$ and $\mu(1,1,1)$ as functions of $\Psi$:
        \begin{align}
            \mu(1,1,0) = \frac{\bar{q}}{p_t \left(1 - \gamma_{00} \right)} \left( \gamma_{11} + \frac{\gamma_{01} \gamma_{10}}{1-\gamma_{00}} + \gamma_{01}  \right) \Psi, \\
            \mu(1,1,1) = \frac{q}{p_t \left(1 - \gamma_{00} \right)} \left( \gamma_{11} + \frac{\gamma_{01} \gamma_{10}}{1-\gamma_{00}} + \gamma_{01}  \right) \Psi,
        \end{align}
        which can be simplified further as:
        \begin{align}
            \label{eqn_Mu110}
            \mu(1,1,0) = \bar{q} \left[ \frac{1}{p_t} - \frac{\rho}{1-\gamma_{00}} \right]\Psi, \\
            \label{eqn_Mu111}
            \mu(1,1,1) = q \left[ \frac{1}{p_t} - \frac{\rho}{1-\gamma_{00}} \right]\Psi,
        \end{align}

        To solve the equations in $\mathcal{E}_6$, we define the function $\mathcal{Y}(\Delta)$ for $1 \leq \Delta < \Delta_T$ as:
       \begin{align}
        \mathcal{Y}(\Delta)\! \triangleq \!\mu(1, \Delta, :)\!=\!\mu(1, \Delta, 0) \!+\! \mu(1, \Delta, 1), 
       \end{align}
       where $\mathcal{Y}(1)=\frac{\Psi}{p_t} - \mathcal{X}(1)$ is obtained from \eqref{eqn_Mu110} and \eqref{eqn_Mu111}. For $2 \leq \Delta < \Delta_T$, based on $\mathcal{E}_6$ and the relation $\gamma_{11} \mathcal{X}(\Delta\!-\!1) \!+\! \gamma_{01} \mathcal{X}(\Delta) \!=\! \frac{\gamma_{11}}{1-\gamma_{00}} \mathcal{X}(\Delta\!-\!1) $, we have:
       
       \begin{align}
       \mathcal{Y}(\Delta) = \mathcal{Y}(\Delta-1) + \frac{\beta}{1-\gamma_{00}} \mathcal{X}(\Delta-1).
       \end{align} 

       Solving recursively yields:
       \begin{align}
           \mathcal{Y}(\Delta) &= \mathcal{Y}(1) + \frac{\beta}{1-\gamma_{00}} \sum_{\delta=1}^{\Delta-1} \mathcal{X}(\Delta) \notag \\
           &= \frac{\Psi}{p_t} - \mathcal{X}(1) + \frac{\beta}{1-\gamma_{00}} \frac{1-\rho^{\Delta-1}}{1-\rho} \mathcal{X}(1)  
       \end{align}
       Given that $1-\rho=\frac{\beta}{1-\gamma_{00}}$, this can be further simplified as follows:
       \begin{align}
           \mathcal{Y}(\Delta) = \frac{\Psi}{p_t} - \mathcal{X}(\Delta), \quad 1 \leq \Delta < \Delta_T.  
       \end{align}
       which gives:
       \begin{align}
            \mu(1,\Delta,0) = \bar{q} \mathcal{Y}(\Delta), \quad
            \mu(1,\Delta,1) = q \mathcal{Y}(\Delta), \quad 1 \leq \Delta < \Delta_T.
        \end{align}

        Solving the system of two equations in $\mathcal{E}_7$, we obtain $\mu(1,\Delta_T,0)$ and $\mu(1,\Delta_T,1)$ as functions of $\Psi$:

        \begin{align*}
            \mu(1,\Delta_T,0) &\!=\! \frac{\bar{q} \left[ \gamma_{11} \mathcal{X}(\Delta_T\!-\!1) \!+\!  \gamma_{01} \mathcal{X}(\Delta_T) \!+\! p_t \mathcal{Y}(\Delta_T\!-\!1) \right]}{1-\bar{q} \bar{p}_t}, \\
            \mu(1,\Delta_T,1) &\!=\! \frac{q \left[ \gamma_{11} \mathcal{X}(\Delta_T\!-\!1) \!+\!  \gamma_{01} \mathcal{X}(\Delta_T) \!+\! p_t \mathcal{Y}(\Delta_T\!-\!1) \right]}{1-\bar{q} \bar{p}_t},
        \end{align*}
        which can be further simplified using the functions $\mathcal{X}(\cdot)$ and $\mathcal{Y}(\cdot)$ as follows:
        \begin{align}
            \mu(1, \Delta_T, 0) = \theta \mathcal{Y}(\Delta_T), \quad
            \mu(1, \Delta_T, 1) = \frac{q}{\bar{q}} \theta \mathcal{Y}(\Delta_T),
        \end{align}
        where we have defined the parameter $\theta$ as: $\theta  \triangleq  \frac{\bar{q} p_t}{1 - \bar{q}\bar{p}_t}$.

        Solving the system of two equations in $\mathcal{E}_8$, we obtain $\mu(1,\Delta,0)$ and $\mu(1,\Delta,1)$ for $\Delta > \Delta_T$:
        \begin{align}
            \mu(1,\Delta,0) &\!=\! \frac{\theta}{p_t} \left[ \gamma_{11} \mathcal{X}(\Delta\!-\!1) \!+\!  \gamma_{01} \mathcal{X}(\Delta) \!+\! p_t \mu(1,\Delta\!-\!1,0) \right] \notag \\
            &=\frac{\beta \theta}{1-\gamma_{00}} \mathcal{X}(\Delta-1) + \theta \mu(1,\Delta\!-\!1,0), \label{eqn_Mu1D0_Diff_0} \\
            \mu(1,\Delta,1) &\!=\! \frac{q}{\bar{q}} \mu(1,\Delta,0).
        \end{align}
        
        We see that \eqref{eqn_Mu1D0_Diff_0} forms a first-order difference equation:
        \begin{align}
            \mu(1,\Delta,0) = \theta \mu(1,\Delta\!-\!1,0) + \mathcal{W}(\Delta), \quad \Delta>\Delta_T,
        \end{align}
        with the known initial value $\mu(1,\Delta_T,0)$, where we have defined $\mathcal{W}(\Delta) = \frac{\beta \theta}{1-\gamma_{00}} \mathcal{X}(\Delta-1)$ as a function of $\Delta$. The solution to this difference equation with constant coefficients is given by:
        \begin{align}
            \mu(1, \Delta, 0) \!=\! \theta^{\Delta - \Delta_T} \mu(1, \Delta_T, 0) \!+\! \sum_{k=\Delta_T+1}^{\Delta} \theta^{\Delta - k} \mathcal{W}(k),
        \end{align}
        where the summation can be obtained as:
        \begin{align}
            \sum_{k=\Delta_T+1}^{\Delta} \theta^{\Delta - k} \mathcal{W}(k) &= \frac{\beta \theta^\Delta \Psi}{(1-\gamma_{00})^2} \sum_{k=\Delta_T+1}^{\Delta} \left( \frac{\rho}{\theta}\right)^{k-1} \\
            &= \frac{\beta \rho^{\Delta_T} }{\gamma_{10}(1-\gamma_{00})} \frac{\theta^{\Delta-\Delta_T}-\rho^{\Delta-\Delta_T}}{\frac{1}{\rho} - \frac{1}{\theta}} \Psi. \notag
        \end{align}

        Thus,
        \begin{align}
            \mu(1,\Delta,0) &\!=\! \theta^{\Delta - \Delta_T} \mu(1, \Delta_T, 0)  \\
            & +  \frac{\beta \rho^{\Delta_T} }{\gamma_{10}(1-\gamma_{00})} \frac{\theta^{\Delta-\Delta_T}-\rho^{\Delta-\Delta_T}}{\frac{1}{\rho} - \frac{1}{\theta}} \Psi, \ \Delta>\Delta_T, \notag\\
            \mu(1,\Delta,1) &\!=\! \frac{q}{\bar{q}} \mu(1,\Delta,0), \quad \Delta>\Delta_T.
        \end{align}

        These last two equations also hold for $\Delta = \Delta_T$. The stationary distribution $\mu(b, \Delta, r)$ was obtained for the case $\Delta_T \geq 2$. Similarly, for $\Delta_T = 1$ and $\Delta_T = 0$, the distribution can be derived by solving the modified equations $\mathcal{E}'_5$, $\mathcal{E}''_4$, and $\mathcal{E}''_5$.
        Note that for $\Delta_T = 1$, equations $\mathcal{E}_1$ through $\mathcal{E}_4$ and $\mathcal{E}_8$ remain valid, while $\mathcal{E}_6$ and $\mathcal{E}_7$ are omitted. For $\Delta_T = 0$, equations $\mathcal{E}_1$ through $\mathcal{E}_3$ remain valid, $\mathcal{E}_6$ and $\mathcal{E}_7$ are omitted, and $\mathcal{E}_8$ remains valid for $\Delta \geq 1$ as follows:
        \begin{align}
            \mu(1,\Delta,0) &\!=\! \theta^{\Delta - 1} \mu(1, 1, 0)  \\
            & +  \! \frac{\beta \rho }{\gamma_{10}(1\!-\!\gamma_{00})} \frac{\theta^{\Delta-1}\!-\!\rho^{\Delta-1}}{\frac{1}{\rho} - \frac{1}{\theta}} \Psi, \ \ \Delta\geq 1, \ \Delta_T=0, \notag\\
            \mu(1,\Delta,1) &\!=\! \frac{q}{\bar{q}} \mu(1,\Delta,0), \quad \Delta\geq 1, \ \Delta_T=0.
        \end{align}

        The system of equations $\mathcal{E}'_5$ yields:
        \begin{align}
            \label{eqn_Mu110_DT1}
            \mu(1,1,0) =  \theta \left[ \frac{1}{p_t} - \frac{\rho}{1-\gamma_{00}} \right]\Psi, \quad \Delta_T=1, \\
            \label{eqn_Mu111_DT1}
            \mu(1,1,1) = \frac{q}{\bar{q}} \theta \left[ \frac{1}{p_t} - \frac{\rho}{1-\gamma_{00}} \right]\Psi, \quad \Delta_T=1.
        \end{align}

        The systems of equations $\mathcal{E}''_4$ and $\mathcal{E}''_5$ yield:
        \begin{align}
            \label{eqn_Mu100_DT0}
            \mu(1,0,0) = \frac{\gamma_{01} \theta}{p_t \left(1 - \gamma_{00} \right)} \Psi,, \quad \Delta_T=0, \\
            \label{eqn_Mu101_DT0}
            \mu(1,0,1) = \frac{q}{\bar{q}} \frac{\gamma_{01} \theta}{p_t \left(1 - \gamma_{00} \right)} \Psi,, \quad \Delta_T=0.
        \end{align}

        \begin{align}
            \label{eqn_Mu110_DT0}
            \mu(1,1,0) =  \frac{ \beta \theta \left( \rho + \gamma_{00} \theta \right) }{\gamma_{10} (1 - \gamma_{00})}  \Psi, \quad \Delta_T=0, \\
            \label{eqn_Mu111_DT0}
            \mu(1,1,1) = \frac{q}{\bar{q}} \frac{ \beta \theta \left( \rho + \gamma_{00} \theta \right) }{\gamma_{10} (1 - \gamma_{00})}  \Psi, \quad \Delta_T=0.
        \end{align}
        
        We can now solve equations \eqref{eqn_PsiEqnDTg2}, \eqref{eqn_PsiEqnDT1}, and \eqref{eqn_PsiEqnDT0} to derive $\Psi$ for the cases $\Delta_T \geq 2$, $\Delta_T = 1$, and $\Delta_T = 0$, respectively.

        \begin{align}
            \text{\eqref{eqn_PsiEqnDTg2}} &\Rightarrow \mu(0,0,:) \!+\!\!\sum_{\Delta=1}^{\infty} \!\! \mathcal{X}(\Delta) \!+\! \mu(1,0,:) \!+\!\!\sum_{\Delta=1}^{\Delta_T\!-\!1} \!\! \mathcal{Y}(\Delta) \!+\! \frac{\Psi}{q} \!=\! 1 \notag \\
            &\Rightarrow \frac{\bar{p}_t}{p_t} \Psi + \frac{(\Delta_T-1)\Psi}{p_t} + \sum_{\Delta=\Delta_T}^{\infty} \! \mathcal{X}(\Delta) +\frac{\Psi}{q} = 1,
        \end{align}
        where we have substituted $\mu(0,0,:) + \mu(1,0,:) = \frac{\bar{p}_t}{p_t} \Psi$ and $\mathcal{Y}(\Delta) = \frac{\Psi}{p_t} - \mathcal{X}(\Delta)$ for $1 \le \Delta \le \Delta_T-1$. The summation is also given by $\sum_{\Delta=\Delta_T}^{\infty} \mathcal{X}(\Delta) = \frac{\rho^{\Delta_T} \Psi}{(1-\gamma_{00})(1-\rho)} = \frac{\rho^{\Delta_T}}{\beta} \Psi$; therefore, $\Psi$ is derived as:
        \begin{align}
            \Psi = \frac{q p_t \beta}{q \left(\beta \Delta_T + p_t \rho^{\Delta_T}\right) + \bar{q} p_t \beta}. \label{eqn_PsiDTg2}
        \end{align}

    Similarly, \eqref{eqn_PsiEqnDT1} and \eqref{eqn_PsiEqnDT0} can be solved for $\Delta_T = 1$ and $\Delta_T = 0$, respectively.
    \begin{align}
        &\text{\eqref{eqn_PsiEqnDT1}} \Rightarrow \mu(0,0,:) \!+\!\sum_{\Delta=1}^{\infty} \! \mathcal{X}(\Delta) \!+\! \mu(1,0,:) \!+\! \frac{\Psi}{q} \!=\! 1  \notag \\
        &\Rightarrow \frac{\bar{p}_t}{p_t} \Psi \!+\! \frac{\rho}{\beta} \Psi \!+\! \frac{1}{q}\Psi \!=\! 1 
        \Rightarrow
        \Psi \!=\! \frac{q p_t \beta}{q \left(\beta \!+\! p_t \rho\right) \!+\! \bar{q} p_t \beta}, \ \ \Delta_T \!=\! 1. \label{eqn_PsiDT1} 
    \end{align}

    \begin{align}
        &\text{\eqref{eqn_PsiEqnDT0}} \Rightarrow \mu(0,0,:) +\sum_{\Delta=1}^{\infty}  \mathcal{X}(\Delta) + \frac{\Psi}{q} = 1 \notag \\
        &\Rightarrow \frac{\gamma_{00}}{1-\gamma_{00}} \Psi + \frac{\rho}{\beta} \Psi +\frac{1}{q}\Psi = 1 
        \Rightarrow
        \Psi = \frac{q \beta}{ q\bar{\beta}+ \beta}, \quad \Delta_T \!=\! 0. \label{eqn_PsiDT0}
    \end{align}

        The resulting $\Psi$ in \eqref{eqn_PsiDT1} and \eqref{eqn_PsiDT0} can be obtained from \eqref{eqn_PsiDTg2} by setting $\Delta_T$ to $1$ and $0$, respectively. Therefore, \eqref{eqn_PsiDTg2} holds for all $\Delta_T \in \{0, 1, 2, \dots\}$, which concludes the proof. 
    \end{proof}

    \section{Proof of Theorem \ref{Theorem_AvgQVAoI}}
    \label{Apen_Theorem_AvgQVAoI}

     The average QVAoI is obtained by \eqref{eqn_AvgQVAoI}. Substituting $\mathbb{P}(s^\prime \mid s) = \mathbb{P}(r^\prime) \mathbb{P}(b^\prime,\Delta^\prime \mid b,\Delta,r)$, we can further simplify it: 
    \begin{align}
        \Delta_{Avg}^{QVAoI} \!\!=\!\! \sum_{(b,\Delta)} \mu(b,\Delta,1) \! \! \sum_{(b^\prime,\Delta^\prime)} \! \Delta^\prime \ \mathbb{P}(b^\prime,\Delta^\prime \! \mid \! b,\Delta,1) .
    \end{align}

    The transition probabilities are given as follows: 
    \begin{align}
        \mathbb{P}(b', \Delta' \mid 0, \Delta, 1) = \begin{cases} 
        \bar{p}_t \bar{\beta}, & b'=0, \Delta'=\Delta, \\
        p_t \bar{\beta}, & b'=0, \Delta'=\Delta+1, \\
        \bar{p}_t \beta, & b'=1, \Delta'=\Delta, \\
        p_t \beta, & b'=1, \Delta'=\Delta+1. 
        \end{cases}
    \end{align}

    \begin{align}
        \Delta \!<\! \Delta_T\!: \ 
        \mathbb{P}(b', \Delta' \! \mid \! 1, \Delta, 1) \!=\! \begin{cases} 
        \bar{p}_t, & b'\!=\!1, \Delta'\!=\!\Delta, \\
        p_t, & b'\!=\!1, \Delta'\!=\!\Delta\!+\!1. 
        \end{cases}
    \end{align}

    \begin{align}
        \Delta \!\geq\! \Delta_T\!: \
        \mathbb{P}(b', \Delta' \! \mid \! 1, \Delta, 1) \!=\! \begin{cases} 
        \bar{p}_t \bar{\beta}, & b'\!=\!0, \Delta'\!=\!0, \\
        p_t \bar{\beta}, & b'\!=\!0, \Delta'\!=\!1, \\
        \bar{p}_t \beta, & b'\!=\!1, \Delta'\!=\!0, \\
        p_t \beta, & b'\!=\!1, \Delta'\!=\!1. 
        \end{cases}
    \end{align}

    The internal summation can be obtained using these transition probabilities:
    \begin{align}
        \Delta_{Avg}^{QVAoI} &= \sum_{\Delta=0}^{\infty} (\Delta + p_t) \mu(0,\Delta,1)  
        \\ 
        &+ \sum_{\Delta=0}^{\Delta_T-1} (\Delta + p_t) \mu(1,\Delta,1)  + \sum_{\Delta=\Delta_T}^{\infty} p_t \mu(1,\Delta,1). \notag \\
        &= \sum_{\Delta=1}^{\infty} \Delta \mu(0,\Delta,1) + \sum_{\Delta=1}^{\Delta_T-1} \Delta \mu(1,\Delta,1) \notag \\
        & + p_t \sum_{\Delta=0}^{\infty} \left[\mu(0,\Delta,1) + \mu(1,\Delta,1) \right] \notag \\
        &= \sum_{\Delta=1}^{\infty} q \Delta \mathcal{X}(\Delta) + \sum_{\Delta=1}^{\Delta_T-1} q \Delta \mathcal{Y}(\Delta) + q p_t, \notag 
    \end{align}
    which can be expressed as follows:
    \begin{align}
        \Delta&_{Avg}^{QVAoI} = 
            q \sum_{\Delta=\Delta_T}^{\infty} \Delta \mathcal{X}(\Delta) + \frac{q\Delta_T (\Delta_T-1)}{2 p_t} \Psi + qp_t,
    \end{align}
    where we have substituted $\mathcal{Y}(\Delta) = \frac{\Psi}{p_t} - \mathcal{X}(\Delta)$. The summation $\sum_{\Delta=\Delta_T}^{\infty} \Delta \mathcal{X}(\Delta) = \frac{\Psi}{1-\gamma_{00}} \sum_{\Delta=\Delta_T}^{\infty} \Delta \rho^\Delta$ can be written as:
    \begin{align}
        \sum_{\Delta=\Delta_T}^{\infty} \! \Delta \mathcal{X}(\Delta) \!=\! \frac{\rho^{\Delta_T} \Psi}{1\!-\!\gamma_{00}} \frac{\Delta_T \!-\! (\Delta_T \!-\! 1)\rho}{(1\!-\!\rho)^2} = \frac{\rho^{\Delta_T} \Psi}{1\!-\!\gamma_{00}}  \mathcal{G}_{\Delta_T}(\rho),
    \end{align}
    where we have defined $\mathcal{G}_{\Delta_T}(x)  \triangleq  \sum_{\Delta=\Delta_T}^{\infty} \Delta \times x^{\Delta-\Delta_T} = \frac{\Delta_T - (\Delta_T - 1)x}{(1-x)^2}$. The resulting average QVAoI is, therefore:
    \begin{align}
        \Delta_{Avg}^{QVAoI} 
             &\!=\! q p_t \!+\! \frac{q \Psi}{p_t \bar{\beta}} \left[ \bar{\beta} \frac{\Delta_T(\Delta_T\!-\!1)}{2} \!+\! \rho^{\Delta_T\!+\!1} \mathcal{G}_{\Delta_T}(\rho) \right]\!,
    \end{align}
    where we have substituted $\frac{p_t}{1-\gamma_{00}}=\frac{\rho}{\bar{\beta}}$.

    To derive the average VAoI, we first obtain the marginal distribution of VAoI, denoted by $\mu_\Delta(\delta)$, which is given by: $\mu_\Delta(\delta) = \sum_{b \in \{0,1\}} \sum_{r \in \{0,1\}} \mu(b,\delta,r)=\sum_{b \in \{0,1\}} \mu(b,\delta,:)$. For $\Delta_T \geq 1$:
    \begin{align}
        \mu_\Delta(0) &= \mu(0,0,:) + \mu(1,0,:) = \frac{\bar{p}_t}{p_t} \Psi, \\
        \mu_\Delta(\Delta) &= \mu(0,\Delta,:) + \mu(1,\Delta,:) = \frac{1}{p_t} \Psi, \quad 1 \leq \Delta < \Delta_T, \\
        \mu_\Delta(\Delta) &= \mu(0,\Delta,:) + \mu(1,\Delta,:) 
        = \mathcal{X}(\Delta) + \frac{1}{\bar{q}} \mu(1,\Delta,0) \notag \\
        &= \mathcal{X}(\Delta) + \frac{\theta^{\Delta-\Delta_T+1}}{\bar{q}} \mathcal{Y}(\Delta_T) \\
        &+ \frac{\beta \rho^{\Delta_T} }{\bar{q} \gamma_{10}(1-\gamma_{00})} \frac{\theta^{\Delta-\Delta_T}-\rho^{\Delta-\Delta_T}}{\frac{1}{\rho} - \frac{1}{\theta}} \Psi, \quad \Delta \geq \Delta_T.  \notag
    \end{align}

    Using thin marginal distribution, the average VAoI for $\Delta_T \geq 1$ can be derived as:
    \begin{align}
        \Delta_{Avg}^{VAoI} &=  \! \sum_{\Delta=0}^{\infty} \Delta \mu_\Delta(\Delta) = \Biggl[ \frac{\Delta_T(\Delta_T-1)}{2 p_t} \\
        &+ \frac{\rho^{\Delta_T} \mathcal{G}_{\Delta_T}(\rho)}{1-\gamma_{00}} 
        + \! \frac{\theta \mathcal{G}_{\Delta_T}(\theta)}{\bar{q}} \left( \frac{1}{p_t} - \frac{\rho^{\Delta_T}}{1-\gamma_{00}} \right) \notag \\
        & +\! \frac{\beta \rho^{\Delta_T}}{\bar{q} \gamma_{10} (1\!-\!\gamma_{00}) (\frac{1}{\rho} \!-\! \frac{1}{\theta})} \Big( \mathcal{G}_{\Delta_T}(\theta) \!-\! \mathcal{G}_{\Delta_T}(\rho) \Big) \Biggr] \Psi. \notag
    \end{align}

    For $\Delta_T = 0$:
    \begin{align}
        \mu_\Delta(0) &= \mu(0,0,:) \!+\! \mu(1,0,:) \!=\! \frac{\bar{p}_t (1 \!-\! \bar{q} \bar{p}_t \bar{\beta})}{(1\!-\!\bar{q} \bar{p}_t)(1 \!-\! \gamma_{00})} \Psi, \\
        \mu_\Delta(\Delta) &= \mu(0,\Delta,:) \!+\! \mu(1,\Delta,:) 
        \!=\! \mathcal{X}(\Delta) \!+\! \frac{1}{\bar{q}} \mu(1,\Delta,0), \notag \\
        &= \mathcal{X}(\Delta) + \frac{\theta^{\Delta-1}}{\bar{q}} \frac{ \beta \theta \left( \rho + \gamma_{00} \theta \right) }{\gamma_{10} (1 - \gamma_{00})}  \Psi \notag \\
        &+ \frac{\beta \rho }{\bar{q} \gamma_{10}(1\!-\!\gamma_{00})} \frac{\theta^{\Delta-1}\!-\!\rho^{\Delta-1}}{\frac{1}{\rho} - \frac{1}{\theta}} \Psi \notag \\
        &= \mathcal{X}(\Delta) + \frac{\beta \bar{\beta} \theta}{\bar{q}^2} \frac{\theta^{\Delta+1}-\frac{\bar{q}}{\bar{\beta}} \rho^{\Delta+1}}{\gamma_{10} (1-\gamma_{00}) (\theta-\rho) } \Psi, \quad \Delta \geq 1.
    \end{align}

    The average VAoI for $\Delta_T = 0$ can be derived as:
    \begin{align}
        \Delta_{Avg}^{VAoI} & = \left[ \frac{\rho}{(1-\gamma_{00})(1-\rho)^2} 
        +  \frac{\beta \bar{\beta} \theta}{\bar{q}^2} \frac{\frac{\theta^{2}}{(1-\theta)^2}-\frac{\bar{q}}{\bar{\beta}} \frac{\rho^{2}}{(1-\rho)^2}}{\gamma_{10} (1-\gamma_{00}) (\theta-\rho) } \right] \Psi.
    \end{align}

\end{document}